\documentclass[11pt,a4paper]{article}

\usepackage[utf8]{inputenc}
\usepackage[T1]{fontenc}
\usepackage{lmodern}
\usepackage{amsmath,amsthm,amssymb,amsfonts,mathtools}
\usepackage{geometry}
\usepackage{setspace}
\usepackage{fancyhdr}
\usepackage{hyperref}
\usepackage{enumitem}
\usepackage{float}
\usepackage{tikz}
\usepackage{pgfplots}
\pgfplotsset{compat=1.18}
\usetikzlibrary{arrows.meta,patterns,decorations.markings,calc,positioning,shapes.geometric,decorations.pathreplacing,backgrounds,shadings,fit}
\usepackage[ruled,vlined,linesnumbered]{algorithm2e}
\usepackage{xcolor}
\usepackage{tcolorbox}
\tcbuselibrary{breakable,skins}
\usepackage{caption}
\usepackage{titlesec}

\definecolor{accent}{HTML}{2B4C7E}      % deep blue
\definecolor{accentlight}{HTML}{E8EEF6}  % very light blue
\definecolor{thmcolor}{HTML}{1B6B4A}     % dark teal (theorems)
\definecolor{defcolor}{HTML}{7B3F00}     % warm brown (definitions)
\definecolor{excolor}{HTML}{4A4A8A}      % muted indigo (examples)
\definecolor{remcolor}{HTML}{606060}      % neutral gray (remarks)

\titleformat{\section}
  {\Large\bfseries\color{accent}}
  {\thesection}{0.8em}{}
  [\vspace{-0.6em}{\color{accent}\rule{\textwidth}{0.6pt}}]
\titleformat{\subsection}
  {\large\bfseries\color{accent!80!black}}
  {\thesubsection}{0.7em}{}

\newcommand{\repourl}{https://github.com/gabayae/expansive-homeomorphisms-complexity-qmetric}

\fancypagestyle{titlepage}{%
  \fancyhf{}%
}

\newtcolorbox{insightbox}[1][]{
  colback=accent!4,
  colframe=accent,
  fonttitle=\bfseries,
  title={#1},
  boxrule=0.5pt,
  arc=3pt,
  breakable,
  left=6pt, right=6pt,
}

\theoremstyle{plain}
\newtheorem{theoreminner}{Theorem}[section]
\newtheorem{lemmainner}[theoreminner]{Lemma}
\newtheorem{propositioninner}[theoreminner]{Proposition}
\newtheorem{corollaryinner}[theoreminner]{Corollary}
\theoremstyle{definition}
\newtheorem{definitioninner}[theoreminner]{Definition}
\newtheorem{exampleinner}[theoreminner]{Example}
\newtheorem{remarkinner}[theoreminner]{Remark}

\tcolorboxenvironment{theoreminner}{
  blanker, breakable,
  left=4pt, borderline west={2.5pt}{0pt}{thmcolor},
  before skip=8pt, after skip=8pt,
}
\tcolorboxenvironment{lemmainner}{
  blanker, breakable,
  left=4pt, borderline west={2.5pt}{0pt}{thmcolor!70},
  before skip=8pt, after skip=8pt,
}
\tcolorboxenvironment{propositioninner}{
  blanker, breakable,
  left=4pt, borderline west={2.5pt}{0pt}{thmcolor!70},
  before skip=8pt, after skip=8pt,
}
\tcolorboxenvironment{corollaryinner}{
  blanker, breakable,
  left=4pt, borderline west={2.5pt}{0pt}{thmcolor!50},
  before skip=8pt, after skip=8pt,
}
\tcolorboxenvironment{definitioninner}{
  blanker, breakable,
  left=4pt, borderline west={2.5pt}{0pt}{defcolor},
  before skip=8pt, after skip=8pt,
}
\tcolorboxenvironment{exampleinner}{
  blanker, breakable,
  left=4pt, borderline west={2pt}{0pt}{excolor!60},
  before skip=8pt, after skip=8pt,
}
\tcolorboxenvironment{remarkinner}{
  blanker, breakable,
  left=4pt, borderline west={1.5pt}{0pt}{remcolor!40},
  before skip=6pt, after skip=6pt,
}

\tcolorboxenvironment{proof}{
  colback=black!3,
  colframe=black!15,
  boxrule=0.4pt,
  arc=2pt,
  breakable,
  left=6pt, right=6pt, top=4pt, bottom=4pt,
  before skip=8pt, after skip=8pt,
}

\newenvironment{theorem}[1][]{\begin{theoreminner}[#1]}{\end{theoreminner}}
\newenvironment{lemma}[1][]{\begin{lemmainner}[#1]}{\end{lemmainner}}
\newenvironment{proposition}[1][]{\begin{propositioninner}[#1]}{\end{propositioninner}}
\newenvironment{corollary}[1][]{\begin{corollaryinner}[#1]}{\end{corollaryinner}}
\newenvironment{definition}[1][]{\begin{definitioninner}[#1]}{\end{definitioninner}}
\newenvironment{example}[1][]{\begin{exampleinner}[#1]}{\end{exampleinner}}
\newenvironment{remark}[1][]{\begin{remarkinner}[#1]}{\end{remarkinner}}

\newcommand{\R}{\mathbb{R}}
\newcommand{\N}{\mathbb{N}}
\newcommand{\Z}{\mathbb{Z}}
\newcommand{\C}{\mathcal{C}}
\newcommand{\BigO}{\mathcal{O}}
\newcommand{\eps}{\varepsilon}

\hypersetup{
    colorlinks=true,
    linkcolor=accent,
    citecolor=thmcolor,
    urlcolor=excolor,
    hypertexnames=false
}

\begin{document}

% ═══════════════════════════════════════════════════════════════
%  TITLE PAGE
% ═══════════════════════════════════════════════════════════════
\pagenumbering{Alph}
\begin{titlepage}
\thispagestyle{titlepage}
\centering
\vspace*{0.6cm}

{\color{accent}\rule{0.4\textwidth}{0.8pt}}

\vspace{0.8cm}

{\huge\bfseries\color{accent} Expansive homeomorphisms on\\[0.5em]
 complexity quasi-metric spaces}

\vspace{0.5cm}
{\Large\itshape\color{accent!70!black} A bridge between dynamical systems\\[0.2em]
 and computational complexity theory}

\vspace{0.3cm}
{\color{accent}\rule{0.25\textwidth}{0.5pt}}

\vspace{0.8cm}

{\large
\textbf{Ya\'e U.\ Gaba}$^{\dagger,\ddagger,\S}$
}

\vspace{0.3cm}

{\small\color{red!70!black}
\href{https://orcid.org/0000-0001-8128-9704}{\textsc{orcid}}%
\enspace$\cdot$\enspace
\href{https://arxiv.org/a/gaba_y_1.html}{\textsc{arXiv}}%
\enspace$\cdot$\enspace
\href{https://scholar.google.com/citations?user=UTszjV4AAAAJ&hl=en}{\textsc{Google Scholar}}%
}

\vspace{0.5cm}

{\small
$^{\dagger}$\,AI Research and Innovation Nexus for Africa (AIRINA Labs),
AI.Technipreneurs, B\'enin\\[0.3em]
$^{\ddagger}$\,Sefako Makgatho Health Sciences University (SMU),
South Africa\\[0.3em]
$^{\S}$\,African Center for Advanced Studies (ACAS), Cameroon
}

\vspace{1.5cm}
\begin{abstract}
\noindent
The complexity quasi-metric of Schellekens is a topological
framework in which the asymmetry of computational comparisons
---``$A$ is at most as fast as $B$'' carrying different information
than ``$B$ is at most as slow as $A$''---is built into the distance
itself.  This paper develops the theory of expansive homeomorphisms
on the resulting space.  The central result is that the scaling
transformation $\psi_\alpha(f)(n)=\alpha f(n)$ is expansive on the
complexity space $(\C,d_\C)$ if and only if $\alpha\neq 1$.
The $\delta$-stable sets coincide with closed
$d_\C$-balls and contain, in particular, every function
pointwise dominated by the basepoint.  We further establish the
exact identity $d_\C^s(\psi_\alpha^n(f),\psi_\alpha^n(g))
=\alpha^{-n}\,d_\C^s(f,g)$ for all $n\ge 0$ (so $\psi_\alpha$ acts
as a strict $(1/\alpha)$-contraction on $d_\C^s$ for $\alpha>1$),
and we connect orbit separation to the classical time hierarchy
theorem of Hartmanis and Stearns.  Unstable sets, conjugate
dynamics, and a no-compact-invariant-set result---showing that
the standard Bowen entropy on compact invariant subsets of
$(\C,d_\C^s)$ is necessarily trivial---are also worked out.  Concrete algorithms and Python
implementations accompany every proof, so each result can be
checked computationally; SageMath snippets sit alongside the
examples, and the full code is in the
\href{\repourl}{companion repository}.

\medskip
\textbf{Keywords:} Expansive homeomorphism; complexity quasi-metric;
computational complexity; asymmetric topology; dynamical systems;
stable and unstable sets

\smallskip
\textbf{2020 MSC:} 54H20; 37B20; 68Q25; 54E15; 54E55
\end{abstract}
\vfill
\end{titlepage}
\pagenumbering{arabic}

\vspace*{0.5cm}
\begin{flushright}
\begin{minipage}{0.72\textwidth}
\raggedleft\itshape\color{accent!80!black}
``The study of iteration, the study of the behaviour of a transformation
when it is repeated, is the fundamental problem of dynamics.''
\vspace{0.3em}

\upshape\color{accent!60!black}--- Henri Poincar\'e
\end{minipage}
\end{flushright}
\vspace{0.8cm}

% ═══════════════════════════════════════════════════════════════
\section{Introduction: where dynamics meets computation}
\label{sec:intro}
% ═══════════════════════════════════════════════════════════════

\subsection{A tale of two theories}

This paper builds a bridge between two subjects that rarely meet:
\emph{dynamical systems}, which studies how a system evolves under
iteration, and \emph{computational complexity theory}, which studies
the resources required to solve computational problems.

At first glance these two fields appear to have little in common.
Dynamical systems theory asks: given a map $\psi\colon X\to X$,
how do the orbits $\{x,\psi(x),\psi^2(x),\ldots\}$ behave as the
number of iterations grows?  Do nearby orbits stay close, or do they
diverge?  If they diverge, how quickly?  Complexity theory, on the
other hand, asks: given a computational problem, how do the resources
required to solve it---time, memory, communication---grow with the
input size~$n$?

What links the two worlds is that complexity comparisons are
asymmetric.  Saying ``algorithm~$A$ is at most as fast as
algorithm~$B$'' does not say the same thing as ``algorithm~$B$
is at most as slow as algorithm~$A$.''  If $f(n)\le g(n)$ for
all~$n$ (so that $f$ is at least as fast as~$g$), the ``cost'' of
moving from $f$ to~$g$ is zero---going to a slower algorithm is
easy to simulate---but the cost of moving from $g$ to~$f$ is
positive, because shaving time off a running algorithm takes real
insight.

This asymmetry is what a \emph{quasi-metric} captures: a distance
function $q$ where $q(x,y)$ need not equal $q(y,x)$.  Schellekens~\cite{schellekens1995} introduced the
\emph{complexity quasi-metric} $d_\C$ on the space of functions
$f\colon\N\to[1,\infty)$ and showed that its topological properties
encode fundamental features of computational complexity.  Romaguera
and Schellekens~\cite{romaguera1999} subsequently developed the
quasi-metric structure of complexity spaces in depth.

On the dynamical side, \emph{expansive homeomorphisms}---maps under
which every pair of distinct points is eventually separated by more
than some fixed threshold~$\delta$---are a central object of study,
going back to Utz~\cite{utz1950} and developed extensively by
Bowen~\cite{bowen1975} and Reddy~\cite{reddy1983}.  Recently, Olela
Otafudu, Matladi, and Zweni~\cite{olela2024} extended the theory of
expansive homeomorphisms to quasi-metric spaces, opening the door to
applications in asymmetric settings.

Our contribution is to walk through that door: to apply the theory
of expansive homeomorphisms to the complexity quasi-metric space.
Basic dynamical concepts---orbits, stable sets, hyperbolicity---turn
out to have direct counterparts in complexity theory.

\paragraph{Related work.}
The complexity quasi-metric space was introduced by
Schellekens~\cite{schellekens1995}, who established its basic
topological properties and connections to denotational semantics.
Romaguera and Schellekens~\cite{romaguera1999} subsequently
developed the quasi-metric structure in depth, proving
completeness results and studying the Smyth completion.  On
the dynamical side, expansive homeomorphisms on metric spaces
have a long history, beginning with Utz~\cite{utz1950} and
substantially advanced by Bowen~\cite{bowen1975}, who connected
expansiveness to topological entropy, and Reddy~\cite{reddy1983},
who established canonical coordinate systems for expansive maps.
The extension of expansive homeomorphisms to quasi-metric spaces
was carried out by Olela-Otafudu, Matladi, and
Zweni~\cite{olela2024}, who proved that $q$-expansiveness and
$q^t$-expansiveness are equivalent and developed the abstract
theory of stable and unstable sets in the asymmetric setting.
K\"unzi~\cite{kunzi1995,kunzi2001} provided a comprehensive
account of the topology of quasi-metric spaces, which underpins
the present work.

What is new here is the \emph{combination}: \cite{olela2024}
develops the abstract theory without any specific quasi-metric
space in mind, while \cite{schellekens1995,romaguera1999} study
the complexity space without dynamical-systems tools.  Bringing
the two together gives dynamical concepts a computational meaning
and gives complexity-theoretic distinctions a dynamical form.

\begin{figure}[H]
\centering
\begin{tikzpicture}[scale=0.95]
    % Dynamics box
    \begin{scope}[shift={(-5.2,0)}]
        \draw[thick, fill=blue!6, rounded corners=18pt]
             (-3.5,-3.8) rectangle (3.5,3.8);
        \node[font=\large\bfseries, blue!70!black] at (0,3.0) {Dynamics};
        \draw[thick, blue!60, ->, >=stealth]
             plot[smooth, tension=0.7]
             coordinates {(-2.4,1.8) (-1.0,1.0) (0.8,1.2) (2.4,0.7)};
        \draw[thick, red!60, ->, >=stealth]
             plot[smooth, tension=0.7]
             coordinates {(-2.4,1.4) (-1.0,0.2) (0.8,-0.8) (2.4,-1.8)};
        \fill[blue!60] (-2.4,1.8) circle (4pt);
        \fill[red!60]  (-2.4,1.4) circle (4pt);
        \node[font=\footnotesize, blue!60] at (-2.4,2.2) {orbit 1};
        \node[font=\footnotesize, red!60] at (-2.4,1.0) {orbit 2};
        \node[font=\small, text width=4.5cm, align=center]
             at (0,-2.6) {How do nearby orbits\\diverge over time?};
    \end{scope}
    % Complexity box
    \begin{scope}[shift={(5.2,0)}]
        \draw[thick, fill=red!6, rounded corners=18pt]
             (-3.5,-3.8) rectangle (3.5,3.8);
        \node[font=\large\bfseries, red!70!black] at (0,3.0) {Complexity};
        \draw[thick, ->] (-2.5,-0.8) -- (2.5,-0.8) node[right, font=\tiny] {$n$};
        \draw[thick, ->] (-2.3,-1.1) -- (-2.3,2.2) node[above, font=\tiny] {time};
        \draw[thick, blue!60, domain=-2.0:2.2, samples=40]
             plot (\x, {0.12*(\x+2)*(\x+2) - 0.7});
        \draw[thick, red!60, domain=-2.0:1.8, samples=40]
             plot (\x, {0.28*(\x+2)*(\x+2) - 0.7});
        \node[font=\footnotesize, blue!60]  at (2.7, 0.8) {$\BigO(n)$};
        \node[font=\footnotesize, red!60]   at (2.3, 2.2) {$\BigO(n^2)$};
        \node[font=\small, text width=4.5cm, align=center]
             at (0,-2.6) {How do resource\\requirements compare?};
    \end{scope}
    % Connecting arrow
    \draw[ultra thick, <->, purple!70!black] (-2.0,0) -- (2.0,0);
    \node[font=\bfseries, purple!70!black, fill=white, inner sep=6pt,
          rounded corners=4pt, draw=purple!30]
         at (0,0) {Quasi-Metrics};
\end{tikzpicture}
\caption{Two worlds connected by quasi-metrics.  Dynamical systems study
orbit divergence; complexity theory studies resource growth.  The
asymmetry inherent in both settings is encoded by the quasi-metric
distance.}
\label{fig:two-worlds}
\end{figure}
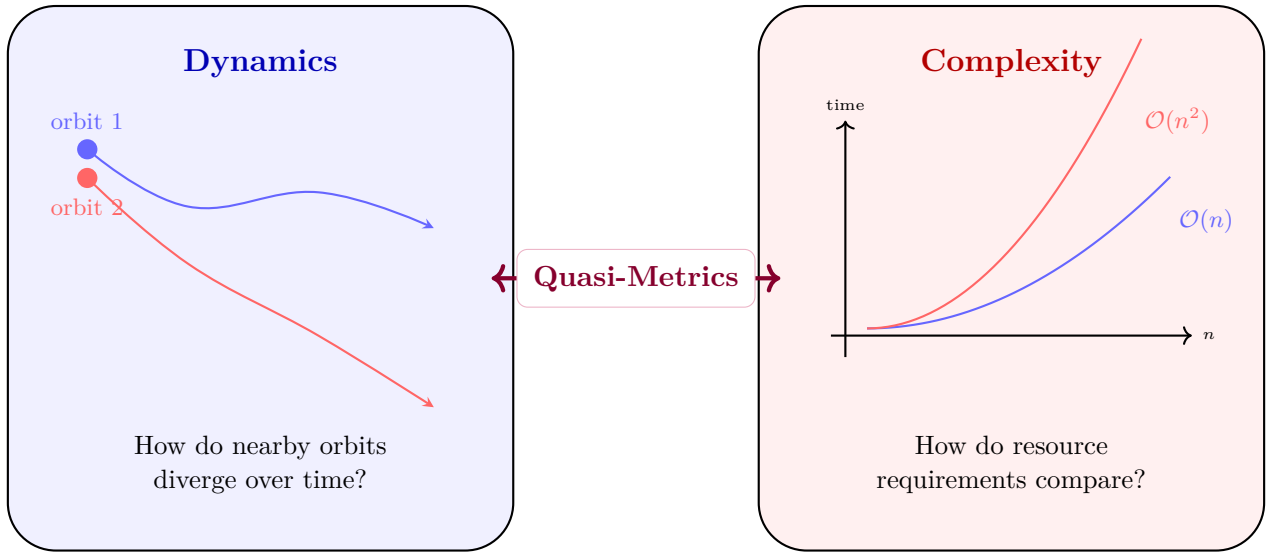

\subsection{Why quasi-metrics?}

Why quasi-metrics rather than ordinary metrics?  The reason is short.

In a standard metric space $(X,d)$, the symmetry axiom $d(x,y)=d(y,x)$
ensures that the cost of moving from~$x$ to~$y$ is the same as the
cost of moving from~$y$ to~$x$.  This is natural in many geometric
settings, but it is \emph{unnatural} in computational settings.  Consider
two algorithms with running times $f(n)=n$ and $g(n)=n^2$.  Given
the faster algorithm~$f$, one can trivially simulate the slower
algorithm~$g$ by wasting time.  But given the slower algorithm~$g$,
one cannot in general produce the faster algorithm~$f$ without effort.
The ``distance'' from fast to slow should therefore be zero (or small),
while the distance from slow to fast should be positive.

This is exactly what a quasi-metric provides.  By dropping the symmetry
axiom, quasi-metrics can encode directional costs, and the complexity
quasi-metric $d_\C$ does precisely this: $d_\C(f,g)=0$ whenever
$f(n)\le g(n)$ for all~$n$ (fast to slow is free), while
$d_\C(g,f)>0$ when $g$ is genuinely slower (slow to fast is costly).

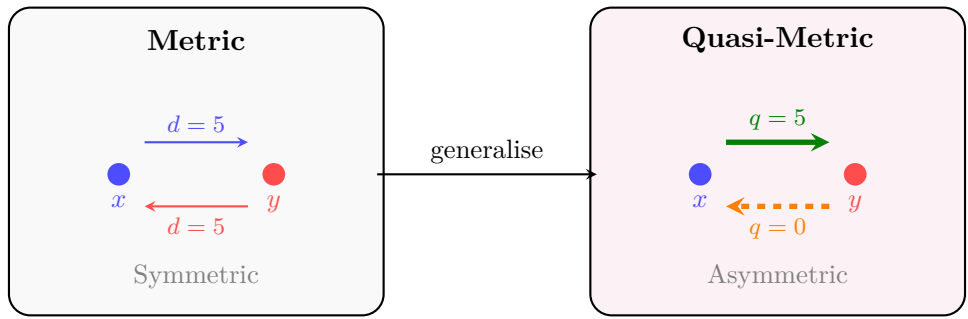
\begin{figure}[H]
\centering
\begin{tikzpicture}[scale=0.85, >=stealth]
  % Metric world
  \begin{scope}[shift={(-4.5,0)}]
    \draw[thick, rounded corners=8pt, fill=gray!5] (-2.9,-2.2) rectangle (2.9,2.6);
    \node[font=\bfseries] at (0,2.1) {Metric};
    \fill[blue!70] (-1.2,0) circle (5pt) node[below=4pt, font=\small] {$x$};
    \fill[red!70]  (1.2,0) circle (5pt) node[below=4pt, font=\small] {$y$};
    \draw[thick, blue!70, ->] (-0.8,0.5) -- (0.8,0.5)
      node[midway, above, font=\footnotesize] {$d=5$};
    \draw[thick, red!70, ->] (0.8,-0.5) -- (-0.8,-0.5)
      node[midway, below, font=\footnotesize] {$d=5$};
    \node[font=\small, gray] at (0,-1.6) {Symmetric};
  \end{scope}
  % Quasi-metric world
  \begin{scope}[shift={(4.5,0)}]
    \draw[thick, rounded corners=8pt, fill=purple!5] (-2.9,-2.2) rectangle (2.9,2.6);
    \node[font=\bfseries] at (0,2.1) {Quasi-Metric};
    \fill[blue!70] (-1.2,0) circle (5pt) node[below=4pt, font=\small] {$x$};
    \fill[red!70]  (1.2,0) circle (5pt) node[below=4pt, font=\small] {$y$};
    \draw[very thick, green!50!black, ->, line width=2.0pt] (-0.8,0.5) -- (0.8,0.5)
      node[midway, above, font=\footnotesize] {$q=5$};
    \draw[very thick, orange, ->, dashed, line width=2.0pt] (0.8,-0.5) -- (-0.8,-0.5)
      node[midway, below, font=\footnotesize] {$q=0$};
    \node[font=\small, gray] at (0,-1.6) {Asymmetric};
  \end{scope}
  \draw[thick, ->] (-1.7,0) -- (1.7,0) node[midway, above, font=\small] {generalise};
\end{tikzpicture}
\caption{From metrics to quasi-metrics.  Dropping symmetry allows the
distance to encode directional information.}
\label{fig:metric-vs-quasi}
\end{figure}

\subsection{Main results}

The paper establishes the following results about the dynamics of
the scaling transformation $\psi_\alpha(f)(n)=\alpha f(n)$ on the
complexity space $(\C,d_\C)$.

\begin{enumerate}[label=\textup{(\arabic*)},leftmargin=2.4em]
\item \emph{Expansiveness} (Theorem~\ref{thm:main-scaling}).
  The map $\psi_\alpha$ is expansive on $(\C,d_\C)$ if and only if
  $\alpha\ne 1$.

\item \emph{Stable and unstable sets}
  (Theorems~\ref{thm:stable-sets} and~\ref{thm:unstable-sets}).
  For $\alpha>1$, the $\delta$-stable set of $f$ under
  $\psi_\alpha$ coincides with the closed $d_\C$-ball
  $\{g:d_\C(f,g)\le\delta\}$; the $\delta$-unstable set admits the
  symmetric description in terms of the conjugate
  quasi-metric~$d_\C^t$.

\item \emph{Exact contraction rate}
  (Corollary~\ref{cor:exact-contraction}). For every $\alpha>0$ and
  every $n\ge 0$,
  \[
    d_\C(\psi_\alpha^n(f),\psi_\alpha^n(g))
    \;=\; \alpha^{-n}\,d_\C(f,g),
  \]
  with optimal multiplicative constant $C=1$.

\item \emph{Hierarchy and orbit separation}
  (Theorem~\ref{thm:hierarchy}). If $d_\C(g,f)>0$, the orbits of
  $f$ and $g$ under $\psi_\alpha$ ($\alpha\ne 1$) are eventually
  $d_\C^s$-separated.  The Hartmanis--Stearns
  condition $f(n)\log f(n)=o(g(n))$ is one sufficient source of
  such pairs.

\item \emph{A structural limit for topological entropy}
  (Proposition~\ref{prop:no-compact-invariant}). For $\alpha\ne 1$,
  every non-empty compact $\psi_\alpha$-invariant subset of
  $(\C,d_\C^s)$ has $d_\C^s$-diameter zero.  Consequently, the
  standard Bowen entropy of $\psi_\alpha$ on compact invariant
  sets is vacuous in this setting.
\end{enumerate}

\subsection{Organisation}

The paper is organised as follows.  Section~\ref{sec:quasi-metric}
recalls the basic theory of quasi-metric spaces, with examples and
motivation.  Section~\ref{sec:complexity-space} introduces the
complexity quasi-metric of Schellekens and establishes its
fundamental properties, including several illustrative computations.
Section~\ref{sec:expansive} defines expansive homeomorphisms in the
quasi-metric setting.  Section~\ref{sec:scaling} introduces the
scaling transformation and proves our main expansiveness result.
Section~\ref{sec:stable-unstable} develops the theory of stable and
unstable sets.  Section~\ref{sec:hyperbolicity} establishes
hyperbolicity.  Section~\ref{sec:hierarchy} connects orbit
separation to the time hierarchy theorem.
Section~\ref{sec:entropy} treats compact invariant sets and the
resulting obstruction to Bowen entropy.
Section~\ref{sec:conclusion} summarises the results.  All Python
implementations and SageMath verification scripts live in the
\href{\repourl}{companion repository} (see the
\href{\repourl/tree/main/code}{\texttt{code/}} directory); a list
of open problems and directions for future work is maintained
alongside the paper in
\href{\repourl/blob/main/OPEN-PROBLEMS.md}{\texttt{OPEN-PROBLEMS.md}}.

% ═══════════════════════════════════════════════════════════════
\section{Quasi-metric spaces}
\label{sec:quasi-metric}
% ═══════════════════════════════════════════════════════════════

We start by recalling the notion of a quasi-metric space.  The
theory has a long history, with major contributions by
K\"unzi~\cite{kunzi1995, kunzi2001},
Cobza\c{s}~\cite{cobzas2013}, and others.  Notation and conventions
follow Olela-Otafudu et al.~\cite{olela2024}.

\begin{definition}[Quasi-metric]\label{def:quasi-metric}
Let $X$ be a non-empty set.  A function
$q\colon X\times X\to[0,\infty)$ is a \emph{quasi-metric} on $X$
if it satisfies the following three axioms for all $x,y,z\in X$:
\begin{enumerate}[label=\textup{(Q\arabic*)},leftmargin=3em]
    \item $q(x,x)=0$; \label{Q1}
    \item $q(x,z)\le q(x,y)+q(y,z)$ \quad(triangle inequality); \label{Q2}
    \item $q(x,y)=0=q(y,x)\;\Rightarrow\;x=y$
      \quad($T_0$~separation). \label{Q3}
\end{enumerate}
The pair $(X,q)$ is called a \emph{quasi-metric space}.
\end{definition}

Notice that the only axiom ``missing'' compared with a metric is
symmetry: we do \emph{not} require $q(x,y)=q(y,x)$.  Dropping that
single axiom changes the theory more than one might expect.

Every quasi-metric $q$ gives rise to two natural companions.

\begin{definition}[Conjugate and symmetrization]\label{def:conjugate}
Let $(X,q)$ be a quasi-metric space.  The \emph{conjugate
quasi-metric} is $q^t(x,y):=q(y,x)$.  The \emph{symmetrization} is
$q^s(x,y):=\max\{q(x,y),q^t(x,y)\}=\max\{q(x,y),q(y,x)\}$.
\end{definition}

It is straightforward to verify that $q^t$ is again a quasi-metric
and that $q^s$ is a genuine metric on~$X$.  Thus every quasi-metric
space carries a canonical metric, obtained by taking the ``worst-case
direction'' of the asymmetric distance.

\begin{example}[Standard quasi-metric on $\R$]\label{ex:standard-qm}
Define $u\colon\R\times\R\to[0,\infty)$ by
\[
  u(x,y) \;=\; (y-x)^+ \;=\; \max\{0,\,y-x\}.
\]
Then $u$ is a quasi-metric:  \ref{Q1}~is immediate; \ref{Q2}~follows
from $(z-x)^+\le(y-x)^++(z-y)^+$; and \ref{Q3}~holds because
$u(x,y)=0=u(y,x)$ gives $y\le x$ and $x\le y$, hence $x=y$.

The conjugate is $u^t(x,y)=(x-y)^+$, and the symmetrization is
$u^s(x,y)=|x-y|$, the usual absolute-value metric.
\end{example}

The quasi-metric~$u$ has a vivid interpretation: \emph{going uphill
costs effort, while going downhill is free.}

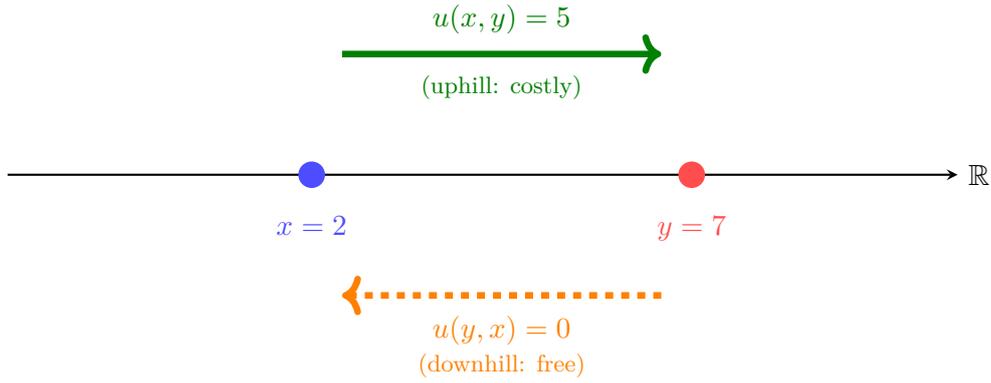
\begin{figure}[H]
\centering
\begin{tikzpicture}[scale=1.0]
    \draw[thick, ->, >=stealth] (-2.0,0) -- (10.5,0) node[right] {$\R$};
    % Tick marks
    \foreach \x in {2, 7} {
      \draw (\x,0.15) -- (\x,-0.15);
    }
    \fill[blue!70]  (2,0) circle (5pt);
    \node[below=12pt, blue!70, font=\bfseries] at (2,0) {$x=2$};
    \fill[red!70]   (7,0) circle (5pt);
    \node[below=12pt, red!70, font=\bfseries]  at (7,0) {$y=7$};
    % Forward arrow (uphill) -- moved higher
    \draw[->, very thick, green!50!black, line width=2.5pt]
         (2.4,1.6) -- (6.6,1.6);
    \node[above=4pt, green!50!black, font=\bfseries] at (4.5,1.6)
         {$u(x,y)=5$};
    \node[below=4pt, green!50!black, font=\footnotesize] at (4.5,1.6)
         {(uphill: costly)};
    % Backward arrow (downhill) -- moved lower
    \draw[->, very thick, orange, dashed, line width=2.5pt]
         (6.6,-1.6) -- (2.4,-1.6);
    \node[below=4pt, orange, font=\bfseries] at (4.5,-1.6)
         {$u(y,x)=0$};
    \node[below=18pt, orange, font=\footnotesize] at (4.5,-1.6)
         {(downhill: free)};
\end{tikzpicture}
\caption{The standard quasi-metric on $\R$: going from $x=2$ up to
$y=7$ costs $u(2,7)=5$, but going from $y=7$ down to $x=2$ is free:
$u(7,2)=0$.}
\label{fig:standard-qm}
\end{figure}

\begin{example}[Weighted quasi-metric on $\R$]\label{ex:weighted-qm}
For any weight $w>0$, define $u_w(x,y)=w\cdot(y-x)^+$.  This is
again a quasi-metric, and it models the situation where the cost of
going uphill is proportional to~$w$.  When $w=1$ we recover the
standard quasi-metric.  The symmetrization is $u_w^s(x,y)=w|x-y|$.
\end{example}

\begin{example}[Discrete quasi-metric]\label{ex:discrete-qm}
On any set $X$, define
\[
  q_d(x,y)=\begin{cases} 0 & \text{if } x=y,\\ 1 & \text{if } x\neq y.\end{cases}
\]
This is both a metric and a quasi-metric (the symmetric case).  It
illustrates that every metric is automatically a quasi-metric.
\end{example}

\begin{example}[Non-example: failing the triangle inequality]\label{ex:non-qm}
On $\R$, define $\rho(x,y)=(y-x)^2$ if $y\ge x$ and $\rho(x,y)=0$
if $y<x$.  Then $\rho$ satisfies~\ref{Q1} and~\ref{Q3}, but it
fails~\ref{Q2}: take $x=0$, $y=2$, $z=3$.  Then $\rho(x,z)=9$,
while $\rho(x,y)+\rho(y,z)=4+1=5<9$.  Hence $\rho$ is \emph{not}
a quasi-metric.  This illustrates that the triangle inequality is a
genuine restriction even in the asymmetric setting.
\end{example}

\begin{example}[Asymmetric topologies on $\{a,b,c\}$]\label{ex:asym-topologies}
Let $X=\{a,b,c\}$ with $q(a,b)=1$, $q(b,a)=3$, $q(a,c)=2$,
$q(c,a)=0$, $q(b,c)=1$, $q(c,b)=2$, and $q(x,x)=0$ for all~$x$.
One can verify that the triangle inequality holds.  The forward
topology $\tau_q$ has open ball $B_q(a,1.5)=\{a,b\}$, while the
conjugate topology $\tau_{q^t}$ has $B_{q^t}(a,1.5)=\{a,c\}$
(since $q^t(a,c)=q(c,a)=0<1.5$).  Thus the two topologies differ:
points that are ``close'' to~$a$ depend on the direction in which we
measure distance.  This is the hallmark of genuine asymmetry.
\end{example}

\begin{remark}[Topological considerations]\label{rem:topology}
A quasi-metric $q$ on $X$ generates a topology $\tau_q$ via the
open balls $B_q(x,\eps):=\{y\in X:q(x,y)<\eps\}$.  The conjugate
$q^t$ generates a potentially different topology $\tau_{q^t}$.
These two topologies coincide if and only if $q$ is symmetric, i.e.,
if $q$ is a metric.  The study of these asymmetric topologies is a
central theme in the work of K\"unzi~\cite{kunzi1995,kunzi2001} and
has deep connections to domain theory in computer
science~\cite{abramsky1994,scott1982}.
\end{remark}

% ═══════════════════════════════════════════════════════════════
\section{The complexity quasi-metric space}
\label{sec:complexity-space}
% ═══════════════════════════════════════════════════════════════

We now turn to the specific quasi-metric space this paper is
about.  The \emph{complexity space} was introduced by
Schellekens~\cite{schellekens1995} and further studied by Romaguera
and Schellekens~\cite{romaguera1999}; for the asymmetric-normed
sequence-space refinement see
Garc\'{i}a-Raffi--Romaguera--S\'anchez-P\'erez~\cite{garcia-raffi2002},
and for the extension to partial functions see
Romaguera--Schellekens--Valero~\cite{romaguera-schellekens-valero2011}.
It is the topological framework in which the resource-usage
functions associated with algorithms live.

\subsection{Definition and basic properties}

Throughout, $\N=\{1,2,3,\dots\}$.  Following
Schellekens~\cite{schellekens1995} and Romaguera and
Schellekens~\cite{romaguera1999}, let $\C$ denote the set of all
functions
\[
  f\colon\N\to(0,\infty)\quad\text{with}\quad
  \sum_{n=1}^\infty 2^{-n}/f(n)<\infty.
\]
Each such function represents the running-time profile of an
algorithm: $f(n)$ is the time required on inputs of size~$n$.  The
summability condition is the standing assumption of the framework:
it ensures that the weighting function $w_\C(f)=\sum_{n\ge 1}
2^{-n}/f(n)$ is finite, and hence that the quasi-metric defined
below takes only finite values.

\begin{remark}[Subspaces are also complexity spaces]\label{rem:subspace-convention}
Romaguera and Schellekens~\cite{romaguera1999} observe that any
subspace of $(\C,d_\C)$ is again called a complexity space.  Two
subspaces deserve special mention.  First, the
\emph{Turing-machine-runtime subspace}
$\C_{\ge 1}:=\{f\in\C: f(n)\ge 1 \text{ for all }n\}$ — for which
the bound $d_\C(f,g)\le 1$ of Theorem~\ref{thm:dc-props}(iii) below
holds uniformly — encodes the elementary fact that any computation
on input of size~$n$ takes at least one step.  Second, the original
formulation of~\cite{romaguera1999} permits the extended codomain
$(0,+\infty]$ with the convention $1/(+\infty)=0$, which makes
$\C$ a lattice under pointwise order; we work with the real-valued
subspace for notational lightness and because all of our examples
are real-valued.  Both restrictions are subspaces in the
Romaguera--Schellekens sense and inherit every result we prove.
\end{remark}

\begin{definition}[Complexity quasi-metric~\cite{schellekens1995}]
\label{def:dc}
The \emph{complexity quasi-metric} $d_\C\colon\C\times\C\to[0,\infty)$
is defined by
\[
  d_\C(f,g)
  \;=\; \sum_{n=1}^{\infty} 2^{-n}\,
    \max\!\left\{0,\;\frac{1}{g(n)}-\frac{1}{f(n)}\right\}.
\]
\end{definition}

The reciprocals $1/f(n)$ and $1/g(n)$ should be thought of as
\emph{efficiency measures}: a faster algorithm has a larger reciprocal.
The difference $1/g(n)-1/f(n)$ is positive when $g$ is more efficient
than~$f$ at input size~$n$, and the weighting $2^{-n}$ ensures
convergence of the series.

\begin{theorem}[Basic properties of $d_\C$]\label{thm:dc-props}
The following hold:
\begin{enumerate}[label=\textup{(\roman*)}]
  \item $d_\C$ is a quasi-metric on $\C$.
  \item $d_\C(f,g)=0$ if and only if $f(n)\le g(n)$ for all $n\in\N$.
  \item $d_\C(f,g)\le w_\C(g)=\sum_{n=1}^\infty 2^{-n}/g(n)$ for all
    $f,g\in\C$.  In particular, if $g(n)\ge 1$ for all $n$ (i.e.,
    $g\in\C_{\ge 1}$), then $d_\C(f,g)\le 1$.
\end{enumerate}
\end{theorem}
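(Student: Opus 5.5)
The plan is to reduce everything to the standard quasi-metric $u$ of Example~\ref{ex:standard-qm}, applied coordinatewise to the reciprocals. For $f\in\C$ write $a_n=1/f(n)>0$. Then the summand in Definition~\ref{def:dc} is
\[
  \max\{0,1/g(n)-1/f(n)\}=(b_n-a_n)^+=u(a_n,b_n),
\]
where $b_n=1/g(n)$. So $d_\C(f,g)=\sum_n 2^{-n}u(a_n,b_n)$ is a weighted sum of copies of $u$.

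\textbf{Finiteness first.} Since $0\le(b_n-a_n)^+\le b_n$ (because $a_n>0$), we get $d_\C(f,g)\le\sum_n 2^{-n}/g(n)=w_\C(g)<\infty$ by the summability condition defining $\C$. This establishes the first claim of (iii) and shows that $d_\C$ takes values in $[0,\infty)$. The second claim of (iii) then follows from $1/g(n)\le 1$ and $\sum_n 2^{-n}=1$.

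\textbf{Part (i).} Axiom \ref{Q1} is immediate, since $(a_n-a_n)^+=0$. For \ref{Q2}, I will use the pointwise inequality $(c_n-a_n)^+\le(b_n-a_n)^++(c_n-b_n)^+$ from Example~\ref{ex:standard-qm}. Multiplying by $2^{-n}$ and summing termwise is legitimate because all the series have nonnegative terms and converge. For \ref{Q3}, suppose both $d_\C(f,g)=0$ and $d_\C(g,f)=0$. A series of nonnegative terms with strictly positive weights vanishes only if every term vanishes, so $b_n\le a_n$ and $a_n\le b_n$ for all $n$. Hence $f=g$.

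\textbf{Part (ii).} By the same vanishing argument, $d_\C(f,g)=0$ if and only if $(1/g(n)-1/f(n))^+=0$ for every $n$. That is, $1/g(n)\le 1/f(n)$ for every $n$, which is equivalent to $f(n)\le g(n)$ because both values are positive and $t\mapsto 1/t$ is decreasing on $(0,\infty)$.

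\textbf{Main obstacle.} There is no real obstacle here. The only point needing care is the finiteness step, which makes the termwise summation of the triangle inequality legitimate; the bound in (iii) supplies exactly this.
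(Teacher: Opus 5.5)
Your proof is correct and follows the paper's argument step by step. You use termwise subadditivity of the positive part for (Q2), the vanishing of a nonnegative series with positive weights for (Q3) and (ii), and the termwise bound $\max\{0,1/g(n)-1/f(n)\}\le 1/g(n)$ for (iii). The only differences are cosmetic: you write the summands as $u(1/f(n),1/g(n))$, and you prove the bound of (iii) first, which makes explicit the finiteness of $d_\C$ that the paper takes from the standing summability assumption.
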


\begin{proof}
\textbf{(i)}  Axiom~\ref{Q1}: $d_\C(f,f)=\sum 2^{-n}\max\{0,0\}=0$.
The triangle inequality~\ref{Q2}: for each~$n$,
\[
  \max\left\{0,\frac{1}{h(n)}-\frac{1}{f(n)}\right\}
  \;\le\;
  \max\left\{0,\frac{1}{g(n)}-\frac{1}{f(n)}\right\}
  +\max\left\{0,\frac{1}{h(n)}-\frac{1}{g(n)}\right\},
\]
since the positive part is subadditive.  Multiplying by $2^{-n}$ and
summing gives $d_\C(f,h)\le d_\C(f,g)+d_\C(g,h)$.  Axiom~\ref{Q3}:
if $d_\C(f,g)=0=d_\C(g,f)$, then $f(n)\le g(n)$ and $g(n)\le f(n)$
for all~$n$, so $f=g$.

\textbf{(ii)}  $d_\C(f,g)=0$ iff each term is zero, iff
$1/g(n)\le 1/f(n)$ for all~$n$, iff $f(n)\le g(n)$.

\textbf{(iii)}  Each term satisfies
$2^{-n}\max\left\{0,\,1/g(n)-1/f(n)\right\}\le 2^{-n}/g(n)$, so
$d_\C(f,g)\le\sum_{n=1}^\infty 2^{-n}/g(n)=w_\C(g)$.  When
$g(n)\ge 1$ for all $n$, this bound is in turn dominated by
$\sum_{n=1}^\infty 2^{-n}=1$.
\end{proof}

Property~(ii) is the key asymmetry result: \emph{moving from a
faster function to a slower one is free}, because $f(n)\le g(n)$
(i.e., $f$ is faster) implies $d_\C(f,g)=0$.

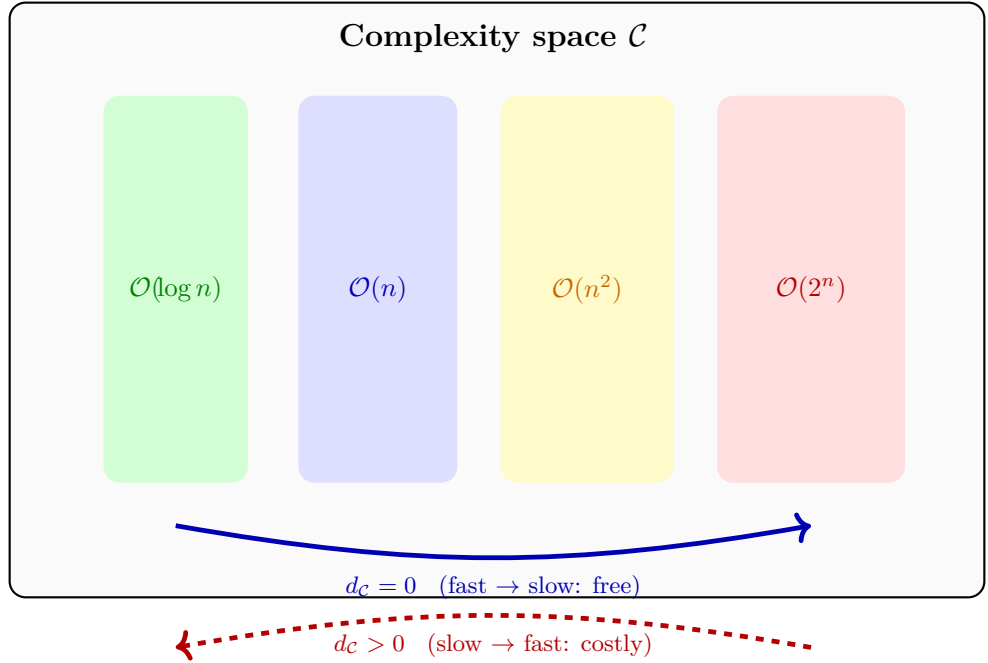
\begin{figure}[H]
\centering
\begin{tikzpicture}[scale=0.95]
    \draw[thick, fill=gray!4, rounded corners=6pt] (-1.0,-4.3) rectangle (12.5,4.0);
    \node[font=\large\bfseries] at (5.7,3.5) {Complexity space $\C$};
    % Classes
    \fill[green!20, opacity=.85, rounded corners=6pt]
         (0.3,-2.7) rectangle (2.3,2.7);
    \node[font=\small, green!50!black, align=center] at (1.3,0) {$\BigO(\!\log n)$};
    \fill[blue!15, opacity=.85, rounded corners=6pt]
         (3.0,-2.7) rectangle (5.2,2.7);
    \node[font=\small, blue!70!black] at (4.1,0) {$\BigO(n)$};
    \fill[yellow!30, opacity=.85, rounded corners=6pt]
         (5.8,-2.7) rectangle (8.2,2.7);
    \node[font=\small, orange!80!black] at (7.0,0) {$\BigO(n^2)$};
    \fill[red!15, opacity=.85, rounded corners=6pt]
         (8.8,-2.7) rectangle (11.4,2.7);
    \node[font=\small, red!70!black] at (10.1,0) {$\BigO(2^n)$};
    % Arrows
    \draw[->, very thick, blue!70!black, line width=1.8pt]
         (1.3,-3.3) to[bend right=10]
         node[below=2pt, font=\footnotesize, pos=0.5] {$d_\C=0$ \;\;(fast $\to$ slow: free)}
         (10.1,-3.3);
    \draw[->, very thick, red!70!black, dashed, line width=1.8pt]
         (10.1,-5.0) to[bend right=10]
         node[below=2pt, font=\footnotesize, pos=0.5] {$d_\C>0$ \;\;(slow $\to$ fast: costly)}
         (1.3,-5.0);
\end{tikzpicture}
\caption{The complexity landscape.  Moving from a faster class to a
slower one is free ($d_\C=0$); the reverse direction is costly ($d_\C>0$).}
\label{fig:complexity-landscape}
\end{figure}

\subsection{Illustrative examples}

A few worked examples make the definition concrete.

\begin{example}[Linear vs.\ quadratic]\label{ex:lin-quad}
Let $f(n)=n$ and $g(n)=n^2$.  Since $f(n)\le g(n)$ for all
$n\ge 1$, Theorem~\ref{thm:dc-props}(ii) gives $d_\C(f,g)=0$.
In the reverse direction,
\[
  d_\C(g,f)
  = \sum_{n=1}^{\infty}2^{-n}\max\!\left\{0,\frac{1}{n}-\frac{1}{n^2}\right\}
  = \sum_{n=2}^{\infty}2^{-n}\cdot\frac{n-1}{n^2},
\]
since the $n=1$ term vanishes ($\frac{1}{1}-\frac{1}{1}=0$).
We compute the first few partial sums to see how the series converges:
\begin{align*}
S_2 &= \tfrac{1}{4}\cdot\tfrac{1}{4} = 0.0625, \\
S_3 &= S_2 + \tfrac{1}{8}\cdot\tfrac{2}{9}
     = 0.0625 + 0.0278 = 0.0903, \\
S_4 &= S_3 + \tfrac{1}{16}\cdot\tfrac{3}{16}
     = 0.0903 + 0.0117 = 0.1020, \\
S_5 &= S_4 + \tfrac{1}{32}\cdot\tfrac{4}{25}
     = 0.1020 + 0.0050 = 0.1070.
\end{align*}
The series converges rapidly due to the $2^{-n}$ factor; by $n=10$
the partial sum is already $0.1108$, within $0.001$ of the limit.
Analytically, splitting $\frac{n-1}{n^2}=\frac{1}{n}-\frac{1}{n^2}$ and
using $\sum_{n=1}^\infty\frac{x^n}{n}=-\!\ln(1-x)$ and
$\sum_{n=1}^\infty\frac{x^n}{n^2}=\operatorname{Li}_2(x)$ at $x=\tfrac12$
gives the closed form
\[
  d_\C(g,f) = \ln 2 - \operatorname{Li}_2\!\bigl(\tfrac12\bigr)
  \approx 0.693 - 0.582 = 0.111.
\]
The exact value of the series can be confirmed symbolically
using SageMath; see
\href{\repourl/blob/main/code/sagemath/complexity_distances.sage}{\texttt{complexity\_distances.sage}}
in the companion repository.
The asymmetry is clear: moving from linear to quadratic is free, but
moving from quadratic to linear costs approximately~$0.111$.
\end{example}

\begin{example}[Logarithmic vs.\ linear]\label{ex:log-lin}
Let $f(n)=\ln(n+1)$ and $g(n)=n$.  Since $\ln(n+1)\le n$ for all
$n\ge 1$, we have $d_\C(f,g)=0$.  But $d_\C(g,f)>0$ because $g$
is slower.  The partial sums are:
\begin{align*}
S_1 &= \tfrac{1}{2}\bigl(\tfrac{1}{\ln 2}-1\bigr) \approx 0.2213, \quad
S_2 = S_1 + \tfrac{1}{4}\bigl(\tfrac{1}{\ln 3}-\tfrac{1}{2}\bigr) \approx 0.3179, \\
S_3 &\approx 0.3609, \quad
S_5 \approx 0.3991, \quad
S_{10} \approx 0.4165.
\end{align*}
The limit is $d_\C(g,f)\approx 0.417$; for exact symbolic evaluation see
\href{\repourl/blob/main/code/sagemath/complexity_distances.sage}{\texttt{complexity\_distances.sage}}.
\end{example}

\begin{example}[Equal functions]\label{ex:equal}
If $f=g$, then $d_\C(f,g)=d_\C(g,f)=0$.  The distance is
symmetric (and zero) for identical functions, as expected.
\end{example}

\begin{example}[Constant shift]\label{ex:constant-shift}
Let $f(n)=n$ and $g(n)=n+c$ for some constant $c>0$.  Then
$f(n)<g(n)$ for all~$n$, so $d_\C(f,g)=0$.  In the reverse
direction, $d_\C(g,f)=\sum_{n=1}^\infty 2^{-n}\cdot c/(n(n+c))$,
which is small but positive---reflecting the fact that~$g$ is only
slightly slower.
\end{example}

\begin{example}[Incomparable functions]\label{ex:incomparable}
Let $f(n)=n+(-1)^{n+1}$ and $g(n)=n$.  Then $f(1)=2>1=g(1)$ but
$f(2)=1<2=g(2)$, so neither $f(n)\le g(n)$ nor $g(n)\le f(n)$
for all~$n$.  (Note that $f$ alternates above and below~$g$: $f$
exceeds $g$ at odd indices and falls below at even indices.)
Consequently, \emph{both} $d_\C(f,g)>0$ and $d_\C(g,f)>0$.
Only odd-index terms contribute to $d_\C(f,g)$:
\[
  d_\C(f,g) = \sum_{\substack{n\ge 1\\n\text{ odd}}}
  \frac{2^{-n}}{n(n+1)}
  = \tfrac{1}{2}\cdot\tfrac{1}{2}
    +\tfrac{1}{8}\cdot\tfrac{1}{12}+\cdots
  \approx 0.262.
\]
Similarly, only even-index terms contribute to $d_\C(g,f)$:
\[
  d_\C(g,f) = \sum_{\substack{n\ge 2\\n\text{ even}}}
  \frac{2^{-n}}{n(n-1)}
  = \tfrac{1}{4}\cdot\tfrac{1}{2}
    +\tfrac{1}{16}\cdot\tfrac{1}{12}+\cdots
  \approx 0.131.
\]
The symmetrized distance is
$d_\C^s(f,g)=\max\{0.262,0.131\}\approx 0.262$.
Observe that $d_\C(f,g)\approx 2\,d_\C(g,f)$: the
``upward oscillation'' is costlier than the ``downward'' one,
reflecting the asymmetry of the quasi-metric.
The exact sums are verified symbolically in
\href{\repourl/blob/main/code/sagemath/incomparable_functions.sage}{\texttt{incomparable\_functions.sage}}.
\end{example}

\begin{example}[The bound is sharp]\label{ex:counter-bounded}
The bound in Theorem~\ref{thm:dc-props}(iii) is attained.  Take
$f(n)=1/n$ and $g(n)=1$ for all $n$.  Both lie in~$\C$: the
summability condition reads $\sum 2^{-n}/f(n)=\sum n\cdot 2^{-n}=2$
for $f$, and $w_\C(g)=\sum 2^{-n}=1$ for $g$.  Since
$f(n)\le g(n)$ for all $n\ge 1$, Theorem~\ref{thm:dc-props}(ii)
gives $d_\C(f,g)=0$: $f$ is faster than $g$ pointwise, so moving
from $f$ to $g$ is ``free.''  In the reverse direction the $n=1$ term vanishes
($f(1)=1=g(1)$) and
\[
  d_\C(g,f)
  \;=\; \sum_{n=2}^\infty 2^{-n}\,\Bigl(\tfrac{1}{f(n)}-\tfrac{1}{g(n)}\Bigr)
  \;=\; \sum_{n=2}^\infty 2^{-n}(n-1)
  \;=\; \tfrac{1}{2}\sum_{m=1}^\infty m\cdot 2^{-m}
  \;=\; 1.
\]
This saturates the universal bound $d_\C(g,f)\le w_\C(g)$ of
Theorem~\ref{thm:dc-props}(iii); here $g\in\C_{\ge 1}$, so
$w_\C(g)=1$.  The example witnesses the asymmetry: the cost of
swapping a faster function for a slower one is bounded by the
weight of the target, with equality possible.
\end{example}

\subsection{The conjugate and symmetrized complexity distances}

Since $d_\C$ is a quasi-metric, we automatically obtain the conjugate
and symmetrization.

\begin{proposition}\label{prop:conjugate-dc}
The conjugate of the complexity quasi-metric is
\[
  d_\C^t(f,g) \;=\; d_\C(g,f)
  \;=\; \sum_{n=1}^{\infty}2^{-n}\max\!\left\{0,\frac{1}{f(n)}-\frac{1}{g(n)}\right\}.
\]
The symmetrization is $d_\C^s(f,g)=\max\{d_\C(f,g),d_\C(g,f)\}$.
\end{proposition}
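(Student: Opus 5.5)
The plan is to unwind the definitions of Definition~\ref{def:conjugate}, applied to the specific quasi-metric $d_\C$ of Definition~\ref{def:dc}. Since Theorem~\ref{thm:dc-props}(i) already tells us that $d_\C$ is a quasi-metric on $\C$, the conjugate $d_\C^t$ and the symmetrization $d_\C^s$ are well defined. Each is a finite-valued function on $\C\times\C$, because $d_\C$ is finite by Theorem~\ref{thm:dc-props}(iii). Nothing beyond substitution is needed, so the proof is a short chain of equalities.

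\textbf{Step 1 (conjugate).} By definition, $d_\C^t(f,g)=d_\C(g,f)$. I will now apply Definition~\ref{def:dc} with the roles of the two arguments exchanged: the first argument is $g$ and the second is $f$. This gives
\[
  d_\C(g,f)=\sum_{n=1}^{\infty}2^{-n}\max\!\left\{0,\frac{1}{f(n)}-\frac{1}{g(n)}\right\},
\]
since in Definition~\ref{def:dc} the reciprocal of the second argument appears with a plus sign and the reciprocal of the first with a minus sign. This is exactly the displayed formula.

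\textbf{Step 2 (symmetrization).} Definition~\ref{def:conjugate} gives $d_\C^s(f,g)=\max\{d_\C(f,g),d_\C^t(f,g)\}$. Substituting Step~1 yields $\max\{d_\C(f,g),d_\C(g,f)\}$, as claimed.

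The only point needing care, and the closest thing to an obstacle, is keeping track of which argument's reciprocal carries the sign when the arguments are swapped. I will check this against Theorem~\ref{thm:dc-props}(ii): the formula should give $d_\C^t(f,g)=0$ exactly when $g(n)\le f(n)$ for all $n$, and it does.

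As an optional remark, I would also note that $d_\C^s$ is a genuine metric. It is nonnegative and symmetric by construction, it satisfies the triangle inequality because a maximum of two quasi-metrics does, and it separates points by~\ref{Q3}.
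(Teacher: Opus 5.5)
Your proposal is correct and follows essentially the same route as the paper: both swap the arguments in Definition~\ref{def:dc} to obtain the displayed series for $d_\C^t$, then read off $d_\C^s$ as the maximum via Definition~\ref{def:conjugate}. Your sanity check against Theorem~\ref{thm:dc-props}(ii) and your remark that $d_\C^s$ is a metric are valid extras; the paper handles the metric property simply by appealing to the general theory.
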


\begin{proof}
By Definition~\ref{def:conjugate}, $d_\C^t(f,g)=d_\C(g,f)$.
Substituting $g$ for the first argument and $f$ for the second
in Definition~\ref{def:dc} yields the stated formula.  That
$d_\C^s=\max\{d_\C,d_\C^t\}$ is a metric follows from the
general theory (Definition~\ref{def:conjugate}).
\end{proof}

The symmetrized distance $d_\C^s$ is a genuine metric on $\C$ and
measures the ``worst-case directional cost'' between two complexity
functions.

\begin{example}[Symmetrization of linear vs.\ quadratic]\label{ex:sym}
From Example~\ref{ex:lin-quad}, $d_\C(f,g)=0$ and
$d_\C(g,f)\approx 0.111$, so $d_\C^s(f,g)\approx 0.111$.
\end{example}

\subsection{Computing the complexity quasi-metric}

The following algorithm approximates $d_\C(f,g)$ numerically.
Since the series has infinitely many terms, we truncate at
$N$~terms; the exponential decay of $2^{-n}$ guarantees rapid
convergence.

\begin{algorithm}[H]
\DontPrintSemicolon
\SetAlgoLined
\KwIn{Functions $f,g$; truncation parameter $N$}
\KwOut{Approximation of $d_\C(f,g)$}
$S \leftarrow 0$\;
\For{$n \leftarrow 1$ \KwTo $N$}{
    $\Delta \leftarrow 1/g(n) - 1/f(n)$\;
    \If{$\Delta > 0$}{
        $S \leftarrow S + 2^{-n}\,\Delta$\;
    }
}
\Return{$S$}\;
\caption{Compute $d_\C(f,g)$}
\label{alg:dc}
\end{algorithm}

\begin{remark}[Convergence rate]
The truncation error after $N$ terms is at most $2^{-N}$, since each
omitted term contributes at most $2^{-n}$.  In practice, $N=80$ gives
accuracy well beyond double-precision floating-point.
\end{remark}

A Python implementation of Algorithm~\ref{alg:dc} is provided in
\href{\repourl/blob/main/code/python/complexity_distance.py}{\texttt{complexity\_distance.py}}.
For many common function pairs, the infinite series admits a
closed-form evaluation via symbolic algebra; we use SageMath for
exact verification throughout (see
\href{\repourl/tree/main/code/sagemath}{\texttt{code/sagemath/}}).

% ═══════════════════════════════════════════════════════════════
\section{Expansive homeomorphisms}
\label{sec:expansive}
% ═══════════════════════════════════════════════════════════════

The dynamical side of the theory comes next.  An expansive
homeomorphism is, roughly, a map that ``spreads things out'': no
two distinct points stay close forever under iteration.

\subsection{Definition and motivation}

In a standard metric space, expansiveness was introduced by
Utz~\cite{utz1950}: a homeomorphism $\psi\colon X\to X$ is
\emph{expansive} if there exists a constant $\delta>0$ such that for
every pair of distinct points $x\neq y$, some iterate $\psi^n$
separates them by more than~$\delta$.  The constant~$\delta$ is
called the \emph{expansive constant}.

Olela-Otafudu et al.~\cite{olela2024} extended this to
quasi-metric spaces, where the asymmetry of the distance introduces
new subtleties.

\begin{definition}[$q$-expansive homeomorphism~{\cite{olela2024}}]
\label{def:expansive}
Let $(X,q)$ be a $T_0$ quasi-metric space and let
$\psi\colon X\to X$ be a homeomorphism with respect to the
metric topology $\tau_{q^s}$ generated by the symmetrization
$q^s=\max\{q,q^t\}$.  We say that $\psi$ is \emph{$q$-expansive} if
there exists a constant $\delta>0$ (an \emph{expansive constant})
such that for every pair of distinct points $x\ne y$ in $X$, there
exists $n\in\Z$ with
\[
  \max\{\,q(\psi^n(x),\psi^n(y)),\;q(\psi^n(y),\psi^n(x))\,\}
  \;>\;\delta.
\]
\end{definition}

Note that we allow both positive and negative iterates: the
separation may occur in the future or in the past.  This is essential
in the quasi-metric setting, because the asymmetry of $q$ means that
forward and backward iterates may behave very differently.

\begin{figure}[H]
\centering
\begin{tikzpicture}[scale=0.9, >=stealth]
  % Box
  \draw[thick, fill=blue!3, rounded corners=12pt] (-1.5,-3.0) rectangle (12.5,4.0);
  \node[font=\bfseries] at (5.5,3.5) {Expansive homeomorphism};
  % Two points
  \fill[blue!70] (1.2,1.4) circle (5pt) node[above=4pt, font=\small] {$x$};
  \fill[red!70]  (1.8,1.0) circle (5pt) node[below=4pt, font=\small] {$y$};
  % Arrow
  \draw[thick, ->] (2.6,1.2) -- (4.0,1.2) node[midway, above, font=\footnotesize] {$\psi^n$};
  % Separated points
  \fill[blue!70] (5.2,2.6) circle (5pt) node[above=4pt, font=\small] {$\psi^n(x)$};
  \fill[red!70]  (5.2,-1.6) circle (5pt) node[below=4pt, font=\small] {$\psi^n(y)$};
  % Distance
  \draw[<->, thick, purple, line width=1.5pt] (5.7,2.4) -- (5.7,-1.4);
  \node[right, purple, font=\small] at (5.8,0.5) {$q > \delta$};
  % Note
  \draw[thick, dashed, gray] (7.4,1.2) -- (11.0,1.2);
  \node[font=\small, text width=3.8cm, align=center] at (9.2,1.2)
    {Every pair eventually separated by $>\delta$};
\end{tikzpicture}
\caption{An expansive homeomorphism: the orbits of any two distinct
points are eventually separated by more than~$\delta$.}
\label{fig:expansive-idea}
\end{figure}
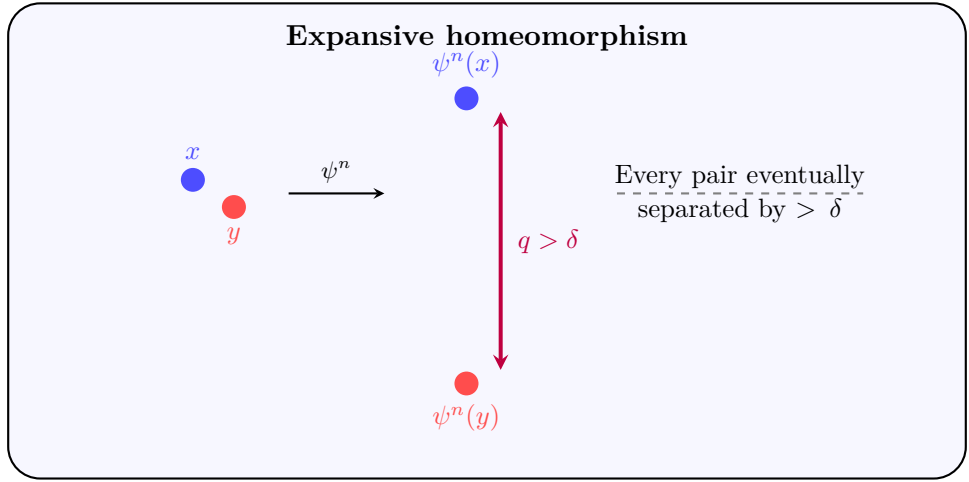

\subsection{Equivalences in the quasi-metric setting}

A key result of~\cite{olela2024} is that expansiveness with respect
to~$q$ is equivalent to expansiveness with respect to the conjugate~$q^t$.

\begin{theorem}[\cite{olela2024}]\label{thm:olela-equiv}
Let $(X,q)$ be a quasi-metric space and $\psi\colon X\to X$ a
homeomorphism.  Then:
\begin{enumerate}[label=\textup{(\roman*)}]
  \item $\psi$ is $q$-expansive if and only if $\psi$ is $q^t$-expansive.
  \item If $\psi$ is $q$-expansive, then $\psi$ is $q^s$-expansive.
    The converse is false in general.
\end{enumerate}
\end{theorem}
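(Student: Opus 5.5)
The plan is to reduce everything to the shape of the separating quantity in Definition~\ref{def:expansive}. That quantity is
\[
  \max\{q(\psi^n(x),\psi^n(y)),\,q(\psi^n(y),\psi^n(x))\}
  \;=\; q^s(\psi^n(x),\psi^n(y)).
\]
So I would write down the corresponding expression for $q^t$ and for $q^s$ and compare them directly.

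\textbf{Step 1: part (i).} Replacing $q$ by $q^t$ in the definition gives
\[
  \max\{q^t(a,b),q^t(b,a)\} \;=\; \max\{q(b,a),q(a,b)\},
\]
which is the same expression. Hence for every $\delta>0$, every distinct pair and every $n\in\Z$, the $q$-condition holds exactly when the $q^t$-condition holds. Also $\tau_{q^s}=\tau_{(q^t)^s}$, since $(q^t)^s=q^s$, so the homeomorphism hypothesis is the same in both cases. This proves (i).

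\textbf{Step 2: part (ii).} The same computation shows that $q$-expansiveness in the sense of Definition~\ref{def:expansive} is literally $q^s$-expansiveness. Here $q^s$ is a metric, so Utz's definition applies, and for a metric $\max\{d,d^t\}=d$. The implication in (ii) therefore holds with the same expansive constant $\delta$.

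\textbf{Step 3: the converse, which is the main obstacle.} Under the max-form of Definition~\ref{def:expansive}, Step 2 shows the converse is \emph{true}, so the clause ``the converse is false'' cannot be taken literally for that definition. It must refer to the finer, ordered notion of~\cite{olela2024}: for every ordered pair $x\neq y$ there is $n\in\Z$ with $q(\psi^n(x),\psi^n(y))>\delta$. I would state this explicitly and then re-prove (i) and (ii) for the ordered notion.
\begin{itemize}
  \item For (i): the $q^t$-condition for the pair $(x,y)$ is the $q$-condition for the pair $(y,x)$. Since all ordered pairs are quantified, the two notions coincide.
  \item For (ii): use $q\le q^s$, so an ordered $q$-separation by more than $\delta$ is also a $q^s$-separation by more than $\delta$.
\end{itemize}

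For the counterexample I would take $X=\R$ with the quasi-metric $u$ of Example~\ref{ex:standard-qm}, and $\psi(x)=2x$. The map $\psi$ is a homeomorphism for $u^s=|\cdot|$.
\begin{itemize}
  \item It is $u^s$-expansive with any $\delta>0$, because $|2^nx-2^ny|=2^n|x-y|\to\infty$.
  \item It is not $u$-expansive in the ordered sense. For $y<x$ and every $n\in\Z$ we have $u(2^nx,2^ny)=2^n(y-x)^+=0$, so no $\delta>0$ separates the ordered pair $(x,y)$.
\end{itemize}
The only delicate point I expect is making this distinction between the max-form and the ordered notion clear, so that the converse claim is consistent with the definition used in this paper.
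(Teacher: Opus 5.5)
Your proposal is correct and more careful than the paper, which gives no proof of this theorem. It quotes the result from the work of Olela Otafudu, Matladi and Zweni, and supports the last clause of (ii) only with Example~\ref{ex:converse-false}.

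Your Steps~1--2 are right. Under Definition~\ref{def:expansive} as written, the separating quantity is $q^s(\psi^n(x),\psi^n(y))$ and the topology is $\tau_{q^s}=\tau_{(q^t)^s}$. So $q$-, $q^t$- and $q^s$-expansiveness coincide, and the clause ``the converse is false'' is false for the paper's own definition. Passing to the ordered notion is the right repair. Your arguments for (i) (swap the pair) and (ii) ($q\le q^s$) are complete, and $\psi(x)=2x$ on $(\R,u)$ is a valid witness.

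That witness is actually needed, because the paper's two-shift example fails under either reading:
\begin{itemize}
\item $\sum_{n\ge 0}2^{-n}|x_n-y_n|$ is symmetric, so $q=q^s$ and it cannot separate the two notions.
\item It violates axiom~\ref{Q3}: sequences agreeing on all non-negative coordinates are at distance $0$ both ways.
\item The shift $\sigma(x)_n=x_{n+1}$ satisfies the separation condition anyway with $\delta=\tfrac12$, since $x_m\ne y_m$ gives $q(\sigma^m x,\sigma^m y)\ge 1$.
\end{itemize}

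When you make the distinction explicit, state one consequence. Under the ordered notion, the ``if'' direction of Theorem~\ref{thm:main-scaling} fails. If $f\le g$ pointwise with $f\ne g$, then $d_\C(\psi_\alpha^n(f),\psi_\alpha^n(g))=0$ for every $n\in\Z$, so that ordered pair is never separated. Thus $\psi_\alpha$ on $(\C,d_\C)$, which is $d_\C^s$-expansive for $\alpha\ne 1$, is itself a second witness for the failure of the converse. No single definition makes both this theorem and Theorem~\ref{thm:main-scaling} correct as stated: your reading fixes the former at the expense of the latter.
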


Part~(i) is striking: the direction of asymmetry does not matter
for whether expansive behaviour \emph{exists}, though it may
change the expansive constant.  Part~(ii) says that quasi-metric
expansiveness is strictly stronger than metric expansiveness.

\begin{example}[Non-equivalence of (ii)]
\label{ex:converse-false}
Consider $X=\{0,1\}^\Z$ (the full two-shift) with the quasi-metric
$q(x,y)=\sum_{n\ge 0}2^{-n}|x_n-y_n|$ (only non-negative indices).
The shift map is $q^s$-expansive but not $q$-expansive, because
forward-only distances cannot detect differences in the past.
\end{example}

\begin{example}[Illustrating part~(i): $q$-expansive $\Leftrightarrow$ $q^t$-expansive]
\label{ex:q-qt-equiv}
Consider $f(n)=n$ and $g(n)=2n$ on $(\C,d_\C)$ with $\psi_2$.
We have $d_\C(f,g)=0$ (since $n\le 2n$) and
$d_\C(g,f)=\sum_{n=1}^\infty 2^{-n}\cdot\frac{1}{2n}
=\frac{1}{2}\sum_{n=1}^\infty\frac{(1/2)^n}{n}=\frac{1}{2}\ln 2
\approx 0.347$
(partial sums: $S_1=0.250$, $S_2=0.313$, $S_3=0.333$,
$S_5=0.344$; exact value: $\frac{1}{2}\ln 2$).
The backward iterates give $d_\C(\psi_2^{-k}(g),\psi_2^{-k}(f))
=2^k\cdot 0.347$, which exceeds any $\delta$ for $k$ large enough.
Thus $\psi_2$ is $d_\C$-expansive.  For the conjugate:
$d_\C^t(f,g)=d_\C(g,f)\approx 0.347>0$, and
$d_\C^t(\psi_2^{-k}(f),\psi_2^{-k}(g))=2^k\cdot 0.347\to\infty$.
So $\psi_2$ is also $d_\C^t$-expansive, confirming part~(i) of
Theorem~\ref{thm:olela-equiv}.
\end{example}

\begin{example}[The identity is not expansive]\label{ex:non-expansive}
The identity map $\mathrm{id}\colon\C\to\C$ is trivially a
$\tau_{d_\C^s}$-homeomorphism.  For any $\delta>0$, choose
$f,g\in\C$ with $0<d_\C^s(f,g)<\delta$ (for example, two distinct
constant functions $f\equiv k_1$, $g\equiv k_2$ with $|k_1-k_2|$
small).  Every iterate of $\mathrm{id}$ leaves the distance
unchanged, so $d_\C^s(\mathrm{id}^n(f),\mathrm{id}^n(g))<\delta$
for all $n\in\Z$.  Hence $\mathrm{id}$ is not expansive.  This
example shows that the conclusion of
Theorem~\ref{thm:main-scaling} can fail even for
$\tau_{d_\C^s}$-homeomorphisms when $\alpha=1$.
\end{example}

\subsection{Checking expansiveness numerically}

Given two functions $f,g$ and a candidate map $\psi$, we can
numerically check whether their orbits separate.  The idea is simple:
iterate $\psi$ both forward and backward and check whether the
distance exceeds~$\delta$ at some iterate.

\begin{algorithm}[H]
\DontPrintSemicolon
\SetAlgoLined
\KwIn{Functions $f\neq g$; map $\psi$; candidate $\delta$;
      iteration bound $M$}
\KwOut{\texttt{True} if separation $>\delta$ found; the iterate $n$}
\For{$n \leftarrow -M$ \KwTo $M$}{
    $d \leftarrow q(\psi^n(f),\,\psi^n(g))$\;
    \If{$d > \delta$}{
        \Return{\texttt{True}, $n$}\;
    }
}
\Return{\texttt{False}, \texttt{None}}\;
\caption{Check expansive separation}
\label{alg:check-exp}
\end{algorithm}

A Python implementation is given in
\href{\repourl/blob/main/code/python/expansiveness_check.py}{\texttt{expansiveness\_check.py}}.

% ═══════════════════════════════════════════════════════════════
\section{The scaling transformation}
\label{sec:scaling}
% ═══════════════════════════════════════════════════════════════

The main dynamical object of the paper is the \emph{scaling
transformation} on the complexity space---the simplest non-trivial
map that respects the multiplicative structure of running-time
functions.

\subsection{Definition and basic properties}

\begin{definition}[Scaling transformation]\label{def:scaling}
For $\alpha>0$, the \emph{scaling transformation}
$\psi_\alpha\colon\C\to\C$ is defined by
\[
  \psi_\alpha(f)(n) \;=\; \alpha\cdot f(n).
\]
Its $k$-th iterate is $\psi_\alpha^k(f)(n)=\alpha^k f(n)$.
\end{definition}

The map $\psi_\alpha$ multiplies every running-time value by the
constant~$\alpha$.  When $\alpha>1$, this makes algorithms
``slower'' (larger running times); when $0<\alpha<1$, it makes them
``faster.''  When $\alpha=1$, it is the identity.

\begin{example}[Scaling a linear function]
If $f(n)=n$ and $\alpha=2$, then $\psi_2(f)(n)=2n$,
$\psi_2^2(f)(n)=4n$, $\psi_2^3(f)(n)=8n$, and in general
$\psi_2^k(f)(n)=2^k n$.  The orbit of~$f$ under $\psi_2$
consists of all functions of the form $2^k n$ for $k\in\Z$.
\end{example}

\begin{example}[Scaling a quadratic function]
If $g(n)=n^2$ and $\alpha=3$, then $\psi_3^k(g)(n)=3^k n^2$.
The orbit consists of functions $3^k n^2$, which are all
quadratic but with different leading coefficients.
\end{example}

The key algebraic property of $\psi_\alpha$ with respect to $d_\C$
is that it acts as a \emph{Lipschitz contraction} (when $\alpha>1$)
or \emph{expansion} (when $\alpha<1$).

\begin{lemma}\label{lem:scaling-lip}
For any $\alpha>0$ and any $f,g\in\C$,
\[
  d_\C(\psi_\alpha(f),\psi_\alpha(g))
  \;=\; \frac{1}{\alpha}\,d_\C(f,g).
\]
\end{lemma}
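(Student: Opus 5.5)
The plan is a direct termwise computation from Definition~\ref{def:dc}, using only that scaling by the positive constant $\alpha$ commutes with taking reciprocals and with the positive part. First, confirm that $\psi_\alpha$ maps $\C$ into $\C$. The summability condition for $\psi_\alpha(f)$ reads $\sum_n 2^{-n}/(\alpha f(n))=\alpha^{-1}w_\C(f)<\infty$. Positivity of $\alpha f(n)$ is clear. Hence both sides of the claimed identity are defined and finite.

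Next, expand the left-hand side termwise. For each $n$,
\[
  \frac{1}{\psi_\alpha(g)(n)}-\frac{1}{\psi_\alpha(f)(n)}
  =\frac{1}{\alpha}\Bigl(\frac{1}{g(n)}-\frac{1}{f(n)}\Bigr).
\]
For any $c>0$ and real $x$ we have $\max\{0,cx\}=c\max\{0,x\}$. Applying this with $c=1/\alpha$, each summand of $d_\C(\psi_\alpha(f),\psi_\alpha(g))$ equals $\alpha^{-1}$ times the corresponding summand of $d_\C(f,g)$.

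Finally, multiply by $2^{-n}$ and sum over $n$. The series $d_\C(f,g)$ converges, since by Theorem~\ref{thm:dc-props}(iii) it is bounded by $w_\C(g)$. We may therefore pull the constant factor $\alpha^{-1}$ out of the convergent series of non-negative terms, which gives the stated identity.

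There is no genuine obstacle. The only point requiring care is the positive homogeneity of $x\mapsto x^+$, which is exactly where $\alpha>0$ is used; for $\alpha<0$ the positive part would turn into a negative part and the identity would fail. I would also remark that iterating the identity gives $d_\C(\psi_\alpha^k(f),\psi_\alpha^k(g))=\alpha^{-k}d_\C(f,g)$ for all $k\in\Z$. Negative $k$ are covered because $\psi_\alpha^{-1}=\psi_{1/\alpha}$. The same identity then holds for $d_\C^t$ and $d_\C^s$, since both are built from $d_\C$ by swapping arguments and taking maxima, and both operations commute with multiplication by a positive constant.
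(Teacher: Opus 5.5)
Your proposal is correct and matches the paper's proof: a termwise computation that pulls the factor $1/\alpha$ out of each positive part by positive homogeneity, followed by summation. Your preliminary check that $\psi_\alpha$ maps $\C$ into $\C$ is one the paper defers to Lemma~\ref{lem:psi-homeo}, and your closing remarks on iterates and on $d_\C^t$, $d_\C^s$ agree with how the paper later uses the lemma.
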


\begin{proof}
We compute directly:
\begin{align*}
  d_\C(\psi_\alpha(f),\psi_\alpha(g))
  &= \sum_{n=1}^{\infty}2^{-n}\max\!\left\{0,
     \frac{1}{\alpha g(n)}-\frac{1}{\alpha f(n)}\right\}\\
  &= \sum_{n=1}^{\infty}2^{-n}\cdot\frac{1}{\alpha}\,
     \max\!\left\{0,\frac{1}{g(n)}-\frac{1}{f(n)}\right\}
  \;=\;\frac{1}{\alpha}\,d_\C(f,g). \qedhere
\end{align*}
\end{proof}

By induction, $d_\C(\psi_\alpha^k(f),\psi_\alpha^k(g))
=\alpha^{-k}\,d_\C(f,g)$ for all $k\ge 0$.

\begin{example}[Numerical verification of Lemma~\ref{lem:scaling-lip}]
\label{ex:lip-numerical}
Let $f(n)=n$, $g(n)=n^2$, and $\alpha=3$.  We have
$d_\C(g,f)\approx 0.111$.  After scaling:
$\psi_3(f)(n)=3n$ and $\psi_3(g)(n)=3n^2$.  Then
$d_\C(\psi_3(g),\psi_3(f))=d_\C(3n^2,3n)
=\sum_{n=1}^\infty 2^{-n}\max\!\left\{0,\frac{1}{3n}-\frac{1}{3n^2}\right\}
=\frac{1}{3}\cdot d_\C(g,f)\approx 0.037$, which matches
$\frac{1}{\alpha}\cdot 0.111=0.037$ exactly
as confirmed by SageMath
(\href{\repourl/blob/main/code/sagemath/partial_sums.sage}{\texttt{partial\_sums.sage}}).
\end{example}

\begin{remark}[Group structure]
The scaling maps form a multiplicative group:
$\psi_\alpha\circ\psi_\beta=\psi_{\alpha\beta}$ and
$\psi_\alpha^{-1}=\psi_{1/\alpha}$.  In particular, the family
$\{\psi_\alpha:\alpha>0\}$ is isomorphic to $(\R_{>0},\cdot)$.
The Lipschitz property of Lemma~\ref{lem:scaling-lip} extends
to this group action:
$d_\C(\psi_\alpha\circ\psi_\beta(f),\psi_\alpha\circ\psi_\beta(g))
=\frac{1}{\alpha\beta}\,d_\C(f,g)$.
\end{remark}

\begin{remark}[Interpretation]
When $\alpha>1$, forward iterates of $\psi_\alpha$ bring functions
\emph{closer} in the $d_\C$ distance (contraction by factor
$1/\alpha$ per step).  Backward iterates push them \emph{apart}
(expansion by factor $\alpha$ per step).  When $0<\alpha<1$, the
roles reverse.
\end{remark}

\subsection{Main theorem: Expansiveness of scaling}

The main result says that the scaling map is expansive exactly
when it is non-trivial.

\begin{lemma}[$\psi_\alpha$ is a homeomorphism]\label{lem:psi-homeo}
For every $\alpha>0$, the map $\psi_\alpha\colon\C\to\C$ is a
bijection with inverse $\psi_{1/\alpha}$, and both $\psi_\alpha$
and $\psi_{1/\alpha}$ are Lipschitz with respect to $d_\C^s$.  In
particular, $\psi_\alpha$ is a $\tau_{d_\C^s}$-homeomorphism.
\end{lemma}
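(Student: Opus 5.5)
The plan has three steps: show that $\psi_\alpha$ maps $\C$ into itself, check the inverse identity, and deduce the Lipschitz bounds from Lemma~\ref{lem:scaling-lip}.

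\textbf{Step 1: $\psi_\alpha$ maps $\C$ into $\C$.} For $f\in\C$ the function $\alpha f$ is still positive. Its weight is
\[
\sum_{n\ge1}2^{-n}/(\alpha f(n))=\alpha^{-1}w_\C(f)<\infty,
\]
so $\psi_\alpha(f)\in\C$. The same argument with $1/\alpha$ in place of $\alpha$ shows $\psi_{1/\alpha}(\C)\subseteq\C$.

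\textbf{Step 2: bijectivity.} The identities $\psi_\alpha\circ\psi_{1/\alpha}=\psi_{1/\alpha}\circ\psi_\alpha=\mathrm{id}_\C$ hold pointwise, since $\alpha\cdot\alpha^{-1}f(n)=f(n)$. Hence $\psi_\alpha$ is a bijection of $\C$ with inverse $\psi_{1/\alpha}$.

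\textbf{Step 3: the Lipschitz estimate.} Lemma~\ref{lem:scaling-lip} gives $d_\C(\psi_\alpha(f),\psi_\alpha(g))=\alpha^{-1}d_\C(f,g)$. Applying the lemma with the roles of $f$ and $g$ exchanged gives the same identity for the conjugate $d_\C^t$. Taking the maximum of the two yields
\[
d_\C^s(\psi_\alpha(f),\psi_\alpha(g))=\alpha^{-1}d_\C^s(f,g).
\]
So $\psi_\alpha$ is $\alpha^{-1}$-Lipschitz with respect to $d_\C^s$. By the same identity applied to $1/\alpha$, the map $\psi_{1/\alpha}$ is $\alpha$-Lipschitz.

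\textbf{Conclusion.} Lipschitz maps are continuous for the metric topology $\tau_{d_\C^s}$. A continuous bijection with continuous inverse is a homeomorphism, so $\psi_\alpha$ is a $\tau_{d_\C^s}$-homeomorphism.

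None of these steps is a real obstacle. The only point that needs care is Step 1, i.e.\ that $\C$ is closed under scaling; this reduces to the one-line identity $w_\C(\alpha f)=\alpha^{-1}w_\C(f)$.
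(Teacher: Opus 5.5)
Your proposal is correct and follows essentially the same route as the paper: closure of $\C$ under scaling via $w_\C(\alpha f)=\alpha^{-1}w_\C(f)$, the inverse identity, and Lemma~\ref{lem:scaling-lip} applied to both $d_\C$ and $d_\C^t$, which gives the exact $d_\C^s$-scaling with Lipschitz constants $1/\alpha$ and $\alpha$. You are slightly more careful than the paper, since you check both compositions and verify that $\psi_{1/\alpha}$ also maps $\C$ into itself.
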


\begin{proof}
First, $\psi_\alpha$ maps $\C$ to itself: for $f\in\C$,
$\sum_{n\ge 1}2^{-n}/(\alpha f(n))=(1/\alpha)\sum_{n\ge 1}2^{-n}/f(n)
<\infty$, so $\psi_\alpha(f)\in\C$.  The inverse identity
$\psi_\alpha\circ\psi_{1/\alpha}=\mathrm{id}$ is immediate.  By
Lemma~\ref{lem:scaling-lip} applied to both $d_\C$ and $d_\C^t$,
$d_\C^s(\psi_\alpha(f),\psi_\alpha(g))=(1/\alpha)d_\C^s(f,g)$,
giving the Lipschitz bound with constant $1/\alpha$; the same
applies to $\psi_{1/\alpha}$ with constant $\alpha$.  Both maps are
therefore $\tau_{d_\C^s}$-continuous, and $\psi_\alpha$ is a
homeomorphism.
\end{proof}

\begin{theorem}[Main theorem]\label{thm:main-scaling}
The scaling transformation $\psi_\alpha$ is expansive on $(\C,d_\C)$
if and only if $\alpha\neq 1$.
\end{theorem}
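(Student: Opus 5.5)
The plan is to split into the two directions and reduce everything to the exact scaling identity of Lemma~\ref{lem:scaling-lip}, extended to all integer iterates.

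\emph{Step 1 (iterates in both directions).} By Lemma~\ref{lem:psi-homeo}, $\psi_\alpha$ is a $\tau_{d_\C^s}$-homeomorphism with inverse $\psi_{1/\alpha}$, so $\psi_\alpha^n=\psi_{\alpha^n}$ for every $n\in\Z$. Applying Lemma~\ref{lem:scaling-lip} with scaling constant $\alpha^n$ then gives, for all $n\in\Z$ and all $f,g\in\C$,
\[
  d_\C(\psi_\alpha^n(f),\psi_\alpha^n(g))=\alpha^{-n}d_\C(f,g),
  \qquad
  d_\C^s(\psi_\alpha^n(f),\psi_\alpha^n(g))=\alpha^{-n}d_\C^s(f,g).
\]
The second identity follows by applying the first to both orders of the arguments and taking the maximum.

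\emph{Step 2 ($\alpha\neq1$ implies expansive).} Fix any $\delta>0$, say $\delta=1$. Take $f\neq g$. By axiom~\ref{Q3} of Theorem~\ref{thm:dc-props}(i), at least one of $d_\C(f,g)$, $d_\C(g,f)$ is positive, so $d_\C^s(f,g)=:c>0$. The quantity in Definition~\ref{def:expansive} at iterate $n$ is exactly $d_\C^s(\psi_\alpha^n f,\psi_\alpha^n g)=\alpha^{-n}c$.
\begin{itemize}
\item If $\alpha>1$, choose $n=-k$ with $k$ large, so that $\alpha^{k}c>\delta$.
\item If $0<\alpha<1$, choose $n=k$ positive and large, so that $\alpha^{-k}c>\delta$.
\end{itemize}
In either case the separation exceeds $\delta$. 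Since any $\delta>0$ works, the choice of expansive constant is immaterial; in particular no uniform lower bound on $c$ is needed.

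\emph{Step 3 ($\alpha=1$ implies not expansive).} Here $\psi_1=\mathrm{id}$, and the argument of Example~\ref{ex:non-expansive} applies. For a given $\delta>0$ I would make the example concrete: take $f\equiv1$ and $g\equiv1+\eta$. Both lie in $\C$, since their weights are finite. A direct computation gives $d_\C^s(f,g)=\sum_n2^{-n}\bigl(1-\tfrac1{1+\eta}\bigr)=\tfrac{\eta}{1+\eta}$, which is less than $\delta$ for small $\eta>0$. Every iterate leaves this distance unchanged, so no $\delta$ can be an expansive constant.

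\emph{Main obstacle.} The only delicate point is the bookkeeping for negative iterates, namely justifying $\psi_\alpha^{-k}=\psi_{\alpha^{-k}}$ and that this map stays in $\C$. Lemma~\ref{lem:psi-homeo} handles this, since it shows $\psi_{1/\alpha}$ preserves the summability condition. Beyond that, the whole argument is the exponential factor $\alpha^{-n}$, which is unbounded over $n\in\Z$ exactly when $\alpha\neq1$.
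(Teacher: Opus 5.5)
Your proof is correct and follows the paper's argument: the exact scaling identity of Lemma~\ref{lem:scaling-lip}, extended to negative iterates via $\psi_\alpha^{-1}=\psi_{1/\alpha}$, forces separation backward in time for $\alpha>1$ and forward in time for $0<\alpha<1$, while for $\alpha=1$ the identity preserves arbitrarily small distances. Working directly with $d_\C^s$ lets you skip the paper's sub-case split on which directed distance is positive. Your explicit pair $f\equiv 1$, $g\equiv 1+\eta$ also makes the $\alpha=1$ direction cleaner than the paper's Case~1, whose second clause overlooks that Definition~\ref{def:expansive} takes the maximum of both directed distances.
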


\begin{proof}
We consider three cases.

\textbf{Case 1: $\alpha=1$.}  The map $\psi_1$ is the identity, so
$d_\C(\psi_1^n(f),\psi_1^n(g))=d_\C(f,g)$ for all~$n$.  If
$d_\C(f,g)>0$, the distance is constant and may be smaller than any
proposed~$\delta$; if $d_\C(f,g)=0$ but $d_\C(g,f)>0$ (i.e., $f$ is
faster than~$g$), then $d_\C(\psi_1^n(f),\psi_1^n(g))=0$ for all~$n$
and no separation occurs.  Hence $\psi_1$ is not expansive.

\textbf{Case 2: $\alpha>1$.}  Take any $\delta\in(0,1)$ and let
$f\ne g$ in $\C$.  Since $(\C,d_\C)$ is $T_0$, at least one of
$d_\C(f,g)$ or $d_\C(g,f)$ is positive.  Note that
$\psi_\alpha^{-1}=\psi_{1/\alpha}$, so iterating
Lemma~\ref{lem:scaling-lip} with $\alpha\mapsto 1/\alpha$ gives,
for all $h_1,h_2\in\C$ and all $k\ge 0$,
\begin{equation}\label{eq:backward-iterate-bound}
  d_\C\!\left(\psi_\alpha^{-k}(h_1),\,\psi_\alpha^{-k}(h_2)\right)
  \;=\; \alpha^{k}\,d_\C(h_1,h_2).
\end{equation}

If $d_\C(f,g)>0$, then~\eqref{eq:backward-iterate-bound} applied to
$(h_1,h_2)=(f,g)$ gives some $k\ge 0$ with
$d_\C(\psi_\alpha^{-k}(f),\psi_\alpha^{-k}(g))=\alpha^k d_\C(f,g)>\delta$,
and Definition~\ref{def:expansive} is satisfied with $n=-k$.

If instead $d_\C(f,g)=0$ but $d_\C(g,f)>0$, apply
\eqref{eq:backward-iterate-bound} to $(h_1,h_2)=(g,f)$: for some
$k\ge 0$,
$d_\C(\psi_\alpha^{-k}(g),\psi_\alpha^{-k}(f))=\alpha^k d_\C(g,f)>\delta$.
Since $d_\C(\psi_\alpha^{-k}(g),\psi_\alpha^{-k}(f))
=d_\C^t(\psi_\alpha^{-k}(f),\psi_\alpha^{-k}(g))$, the symmetric
condition in Definition~\ref{def:expansive} (with $q=d_\C$, $n=-k$)
is satisfied.

\textbf{Case 3: $0<\alpha<1$.}  Take any $\delta\in(0,1)$.  Let
$f\ne g$ in $\C$.  At least one of $d_\C(f,g)$ or $d_\C(g,f)$ is
positive.

If $d_\C(f,g)>0$, then for forward iterates:
$d_\C(\psi_\alpha^k(f),\psi_\alpha^k(g))=\alpha^{-k}d_\C(f,g)>\delta$
for $k$ large enough.  Definition~\ref{def:expansive} is satisfied
with $n=k$.

If instead $d_\C(f,g)=0$ but $d_\C(g,f)>0$, apply the same argument
to the pair $(g,f)$: $d_\C(\psi_\alpha^k(g),\psi_\alpha^k(f))
=\alpha^{-k}d_\C(g,f)>\delta$ for $k$ large, and the symmetric
condition in Definition~\ref{def:expansive} is satisfied with
$n=k$.
\end{proof}

\begin{example}[Expansiveness for $\alpha=2$]\label{ex:expansive-alpha2}
Consider $f(n)=n$ and $g(n)=n+1$. For $\alpha=2$, we have $d_\C(f,g)=0$
(since $f(n)<g(n)$ for all $n$) but
\[
  d_\C(g,f)=\sum_{n=1}^\infty
  \frac{2^{-n}}{n(n+1)}.
\]
Computing partial sums: $S_1=\frac{1}{4}=0.250$,
$S_2=S_1+\frac{1}{24}\approx 0.292$,
$S_3\approx 0.302$, $S_5\approx 0.306$, converging to
$\approx 0.307$
(see \href{\repourl/blob/main/code/sagemath/complexity_distances.sage}{\texttt{complexity\_distances.sage}}).
The backward iterates give:
\[
d_\C(\psi_2^{-k}(g),\psi_2^{-k}(f)) = 2^k \cdot 0.307
\]
For $\delta=0.5$, we need $2^k \cdot 0.307 > 0.5$, i.e., $k \geq 1$.
Indeed, $d_\C(\psi_2^{-1}(g),\psi_2^{-1}(f)) \approx 0.614 > 0.5$.
Thus $\psi_2$ is expansive with expansive constant $\delta=0.5$.
\end{example}

\begin{example}[Non-expansiveness for $\alpha=1$]\label{ex:non-expansive-alpha1}
Take $f(n)=n$ and $g(n)=n^2$. For $\alpha=1$, $\psi_1$ is the identity, 
so $d_\C(\psi_1^n(f),\psi_1^n(g)) = d_\C(f,g)=0$ for all $n$, 
while $d_\C(\psi_1^n(g),\psi_1^n(f)) = d_\C(g,f)\approx 0.111$ for all $n$.
No matter how large $n$ is, $d_\C(f,g)$ remains $0$, so no separation occurs 
in that direction. Therefore $\psi_1$ is not expansive.
\end{example}

\begin{figure}[H]
\centering
\begin{tikzpicture}[scale=0.8]
    % Title with more space above
    \node[font=\small\bfseries] at (5.5,6.2) {Three regimes of the scaling transformation $\psi_\alpha$};
    
    % Case alpha > 1 - with shorter axes
    \begin{scope}[shift={(-5.5,0)}]
        \draw[thick, ->, >=stealth] (-0.8,0) -- (3.8,0) node[right, font=\footnotesize] {iterate $k$};
        \draw[thick, ->, >=stealth] (0,-0.8) -- (0,3.8);
        \node[font=\small\bfseries, blue!80!black] at (1.5,-1.0) {(a) $\alpha > 1$};
        
        % Exponential growth curve (backward iterates) - shorter domain
        \draw[very thick, blue!80!black, domain=0:3.2, samples=30]
             plot (\x, {0.35*exp(0.65*\x)});
        
        % Delta threshold line - positioned appropriately
        \draw[thick, orange!80!black, dashed] (-0.5,2.0) -- (3.8,2.0);
        \node[orange!80!black, right, font=\footnotesize] at (3.8,2.0) {$\delta$};
        
        % Annotations - moved closer to curve
        \node[font=\tiny, align=center, text width=2.5cm, blue!70!black] at (1.5,2.8) 
             {$d_\C(\psi_\alpha^{-k}(f),\psi_\alpha^{-k}(g))$\\$=\alpha^k \cdot d_\C(f,g)$};
        
        \node[font=\footnotesize, gray, align=center] at (1.5,-1.8) 
             {Backward orbits diverge};
             
        % Example points - fewer points
        \foreach \k in {0,1,2,3} {
            \fill[blue!80!black] (\k, {0.35*exp(0.65*\k)}) circle (2.5pt);
        }
        
        % Iterate labels
        \foreach \k in {0,1,2,3} {
            \node[font=\tiny, below] at (\k,-0.15) {$\k$};
        }
        
        \node[font=\tiny, above left] at (0,0.35) {$d_\C(f,g)$};
    \end{scope}
    
    % Vertical space between cases
    \draw[white] (4.2,0) -- (4.8,0); % Invisible spacer
    
    % Case alpha = 1 - with shorter axes
    \begin{scope}[shift={(0,0)}]
        \draw[thick, ->, >=stealth] (-0.8,0) -- (3.8,0) node[right, font=\footnotesize] {iterate $k$};
        \draw[thick, ->, >=stealth] (0,-0.8) -- (0,3.8);
        \node[font=\small\bfseries, purple!80!black] at (1.5,-1.0) {(b) $\alpha = 1$};
        
        % Constant line (identity map) - shorter
        \draw[very thick, purple!80!black, line width=2.2pt] (0,1.5) -- (3.5,1.5);
        
        % Delta threshold line
        \draw[thick, orange!80!black, dashed] (-0.5,2.0) -- (3.8,2.0);
        \node[orange!80!black, right, font=\footnotesize] at (3.8,2.0) {$\delta$};
        
        % Annotations
        \node[font=\tiny, align=center, text width=2.5cm, purple!70!black] at (1.5,2.8) 
             {$d_\C(\psi_1^k(f),\psi_1^k(g))$\\$= d_\C(f,g)$ constant};
        
        \node[font=\footnotesize, gray, align=center] at (1.5,-1.8) 
             {No separation};
             
        % Iterate labels
        \foreach \k in {0,1,2,3} {
            \node[font=\tiny, below] at (\k,-0.15) {$\k$};
            \fill[purple!80!black] (\k,1.5) circle (2.5pt);
        }
        
        \node[font=\tiny, left] at (0,1.5) {$d_\C(f,g)$};
    \end{scope}
    
    % Vertical space between cases
    \draw[white] (4.2,0) -- (4.8,0); % Invisible spacer
    
    % Case alpha < 1 - with shorter axes
    \begin{scope}[shift={(5.5,0)}]
        \draw[thick, ->, >=stealth] (-0.8,0) -- (3.8,0) node[right, font=\footnotesize] {iterate $k$};
        \draw[thick, ->, >=stealth] (0,-0.8) -- (0,3.8);
        \node[font=\small\bfseries, green!60!black] at (1.5,-1.0) {(c) $\alpha < 1$};
        
        % Exponential growth curve (forward iterates) - shorter domain
        \draw[very thick, green!60!black, domain=0:3.2, samples=30]
             plot (\x, {0.35*exp(0.65*\x)});
        
        % Delta threshold line
        \draw[thick, orange!80!black, dashed] (-0.5,2.0) -- (3.8,2.0);
        \node[orange!80!black, right, font=\footnotesize] at (3.8,2.0) {$\delta$};
        
        % Annotations
        \node[font=\tiny, align=center, text width=2.5cm, green!60!black] at (1.5,2.8) 
             {$d_\C(\psi_\alpha^k(f),\psi_\alpha^k(g))$\\$=\alpha^{-k} \cdot d_\C(f,g)$};
        
        \node[font=\footnotesize, gray, align=center] at (1.5,-1.8) 
             {Forward orbits diverge};
             
        % Example points
        \foreach \k in {0,1,2,3} {
            \fill[green!60!black] (\k, {0.35*exp(0.65*\k)}) circle (2.5pt);
        }
        
        % Iterate labels
        \foreach \k in {0,1,2,3} {
            \node[font=\tiny, below] at (\k,-0.15) {$\k$};
        }
        
        \node[font=\tiny, above left] at (0,0.35) {$d_\C(f,g)$};
    \end{scope}
    
    % Common y-axis label
    \node[font=\footnotesize, rotate=90] at (-7.5,1.5) {distance $d_\C$};
    
    % Common x-axis explanation
    \node[font=\tiny, align=center] at (5.5,-2.8) 
         {$k$: iteration count (positive = forward, negative = backward)};
\end{tikzpicture}
\vspace{0.2cm}
\caption{The three regimes of the scaling transformation $\psi_\alpha$. 
(a) For $\alpha>1$, backward iterates cause exponential separation: 
$d_\C(\psi_\alpha^{-k}(f),\psi_\alpha^{-k}(g)) = \alpha^k d_\C(f,g)$. 
(b) For $\alpha=1$, the map is the identity and distances remain constant. 
(c) For $0<\alpha<1$, forward iterates cause exponential separation: 
$d_\C(\psi_\alpha^k(f),\psi_\alpha^k(g)) = \alpha^{-k} d_\C(f,g)$. 
In cases (a) and (c), for any $\delta>0$, there exists $k$ such that the 
distance exceeds $\delta$, making $\psi_\alpha$ expansive.}
\label{fig:three-cases}
\end{figure}
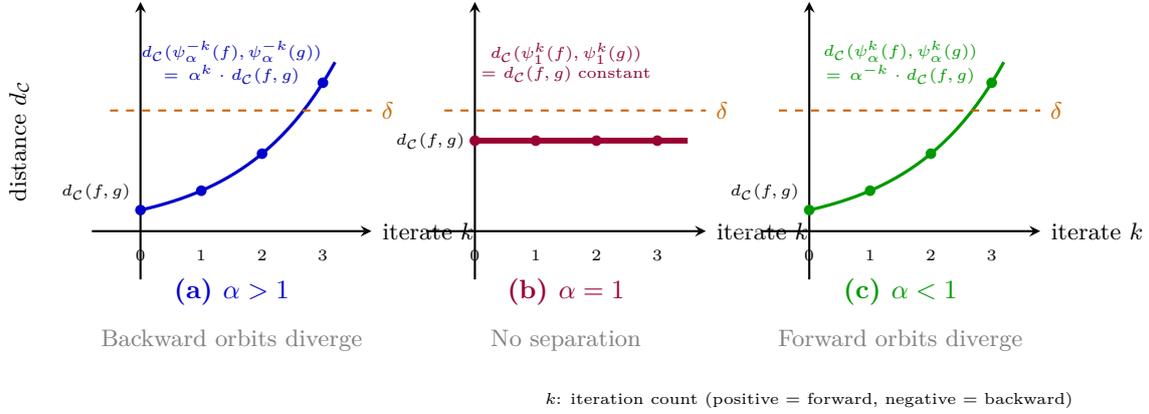

\begin{example}[Concrete separation for $\alpha=2$]\label{ex:concrete-sep}
Let $f(n)=n$, $g(n)=n+1$, and $\alpha=2$.  Then $d_\C(f,g)=0$
(since $f(n)<g(n)$) and $d_\C(g,f)>0$.  After $k$~backward
iterates, $d_\C(\psi_2^{-k}(g),\psi_2^{-k}(f))=2^k\,d_\C(g,f)$.
Even with $d_\C(g,f)$ small, a few backward steps suffice to
exceed any given~$\delta$.  A numerical verification is given in
\href{\repourl/blob/main/code/python/expansiveness_check.py}{\texttt{expansiveness\_check.py}}.
\end{example}

\begin{example}[Separation for $\alpha=1/3$]\label{ex:sep-third}
Let $f(n)=n^2$, $g(n)=n^3$, and $\alpha=1/3$.  Here
$d_\C(f,g)=0$ but $d_\C(g,f)>0$.  After $k$~forward iterates,
$d_\C(\psi_{1/3}^k(g),\psi_{1/3}^k(f))=3^k\,d_\C(g,f)\to\infty$.
\end{example}

\subsection{The expansive constant}

For a given $\alpha\neq 1$ and pair $f\neq g$, one may ask: what is
the \emph{smallest} iterate~$n$ at which separation occurs?  This
depends on the initial distance and the value of~$\alpha$.

\begin{proposition}[Separation iterate]\label{prop:sep-iterate}
Let $\alpha>1$ and let $f,g\in\C$ with $d:=d_\C(f,g)>0$.  Then
$d_\C(\psi_\alpha^{-k}(f),\psi_\alpha^{-k}(g))>\delta$ for all
$k\ge\lceil\log_\alpha(\delta/d)\rceil$.
\end{proposition}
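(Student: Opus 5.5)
The plan is to reduce everything to the exact backward-iterate identity~\eqref{eq:backward-iterate-bound} from the proof of Theorem~\ref{thm:main-scaling}. That identity comes from Lemma~\ref{lem:scaling-lip} applied to $\psi_\alpha^{-1}=\psi_{1/\alpha}$ and iterated. It gives, for every $k\ge 0$,
\[
  d_\C\bigl(\psi_\alpha^{-k}(f),\psi_\alpha^{-k}(g)\bigr)=\alpha^k d,
\]
so the separation question becomes the purely numerical one of when $\alpha^k d>\delta$. Since $\alpha>1$ and $d>0$, the map $k\mapsto\alpha^k d$ is strictly increasing. Hence once the threshold is crossed at some $k_0$, it stays crossed for all $k\ge k_0$; the ``for all $k$'' part of the statement is just monotonicity.

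The key steps, in order, are as follows.
\begin{enumerate}
  \item Set $k_0:=\lceil\log_\alpha(\delta/d)\rceil$ and note $k_0\ge\log_\alpha(\delta/d)$.
  \item Exponentiate with base $\alpha>1$, which preserves order, to get $\alpha^{k_0}\ge\delta/d$, i.e.\ $\alpha^{k_0}d\ge\delta$.
  \item Propagate to all $k\ge k_0$ by monotonicity.
  \item Treat the case $\delta<d$ separately. Then $\log_\alpha(\delta/d)<0$ and $k_0\le 0$. Since backward iterates require $k\ge0$, I would read the claim for $k\ge\max\{0,k_0\}$; already at $k=0$ the distance is $d>\delta$.
\end{enumerate}

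The main obstacle is the strict inequality. The ceiling only yields $\alpha^{k_0}d\ge\delta$, with equality exactly when $\log_\alpha(\delta/d)$ is an integer, i.e.\ when $\delta=\alpha^m d$ for some $m\in\Z$. In that boundary case $k=k_0$ gives distance exactly $\delta$, not $>\delta$. So I would first verify the strict claim whenever $\log_\alpha(\delta/d)\notin\Z$, where $k_0>\log_\alpha(\delta/d)$ strictly and strict monotonicity of $\alpha^x$ finishes the argument. For the integer case I would adjust the threshold to $\lfloor\log_\alpha(\delta/d)\rfloor+1$, which is the least integer strictly exceeding $\log_\alpha(\delta/d)$. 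That threshold gives $\alpha^k d>\delta$ for all larger $k$ in every case and differs from the ceiling only in the degenerate integer case. Alternatively one can weaken the conclusion to ``$\ge\delta$'' at the ceiling. I would record this as a remark so the statement is correct as used later (for instance, with the expansive constants of Theorem~\ref{thm:main-scaling}).

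Finally, I would note that the same argument applied to the pair $(g,f)$ covers the case $d_\C(f,g)=0<d_\C(g,f)$. The argument is verbatim with $d$ replaced by $d_\C(g,f)$, mirroring Case~2 of Theorem~\ref{thm:main-scaling}.
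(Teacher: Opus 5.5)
Your proposal is correct and follows the paper's route: use the exact identity $d_\C(\psi_\alpha^{-k}(f),\psi_\alpha^{-k}(g))=\alpha^k d$ to reduce the claim to $\alpha^k d>\delta$, i.e.\ $k>\log_\alpha(\delta/d)$. Your boundary-case observation is also right, and it is a real defect in the statement that the paper's one-line proof glosses over. When $\log_\alpha(\delta/d)\in\Z$ (for instance $\alpha=2$, $\delta=2d$), the ceiling gives exactly $\alpha^{k}d=\delta$ at $k=\lceil\log_\alpha(\delta/d)\rceil$, so the threshold should be $\lfloor\log_\alpha(\delta/d)\rfloor+1$. One small correction to your step~4: the restriction to $k\ge 0$ is unnecessary, because the identity also holds for negative $k$ (these are forward iterates, covered by Corollary~\ref{cor:exact-contraction}).
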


\begin{proof}
We need $\alpha^k d>\delta$, i.e.,
$k>\log_\alpha(\delta/d)$.
\end{proof}

\begin{example}[Computing the separation iterate]\label{ex:sep-iterate-num}
Let $\alpha=2$, $f(n)=n^2$, $g(n)=n$, and $\delta=0.5$.  Since
$f(n)\ge g(n)$ for $n\ge 1$, we have $d:=d_\C(f,g)>0$.
Numerically, $d\approx 0.111$.  By Proposition~\ref{prop:sep-iterate},
separation occurs for
$k\ge\lceil\log_2(0.5/0.111)\rceil=\lceil\log_2(4.50)\rceil
=\lceil 2.17\rceil=3$.
Indeed, $d_\C(\psi_2^{-3}(f),\psi_2^{-3}(g))=8\cdot 0.111
=0.888>0.5$.  For $k=2$: $4\cdot 0.111=0.444<0.5$, confirming
that $k=3$ is the \emph{first} iterate achieving separation
(verified in
\href{\repourl/blob/main/code/sagemath/separation_iterates.sage}{\texttt{separation\_iterates.sage}}).
\end{example}

\begin{example}[Counterexample: $\alpha$ close to 1 delays separation]\label{ex:slow-sep}
Let $\alpha=1.01$, $f(n)=n$, $g(n)=n+1$, and $\delta=0.5$.
Then $d:=d_\C(g,f)\approx 0.307$.  The separation iterate satisfies
$k\ge\lceil\log_{1.01}(0.5/0.307)\rceil=\lceil\log_{1.01}(1.63)\rceil
\approx\lceil 49.1\rceil=50$.
Thus $\alpha$ close to~$1$ requires $50$ backward iterates
for separation, while $\alpha=2$ needs only~$3$ (Example~\ref{ex:sep-iterate-num}).
This illustrates that the ``speed'' of expansiveness is controlled
by $\log\alpha$
(see \href{\repourl/blob/main/code/sagemath/separation_iterates.sage}{\texttt{separation\_iterates.sage}}).
\end{example}

% ═══════════════════════════════════════════════════════════════
\section{Stable and unstable sets}
\label{sec:stable-unstable}
% ═══════════════════════════════════════════════════════════════

After expansiveness comes the study of \emph{stable} and
\emph{unstable sets}: the points whose orbits stay close to a
given orbit forward or backward in time.  In our setting these
sets have a clean interpretation in terms of complexity classes.

\subsection{Stable sets}

\begin{definition}[Stable set]\label{def:stable}
Let $(X,q)$ be a quasi-metric space and $\psi\colon X\to X$ a
homeomorphism.  The \emph{$\delta$-stable set} of $f\in X$ is
\[
  S_q(f,\delta,\psi)
  \;=\; \{g\in X : q(\psi^n(f),\psi^n(g))\le\delta
    \text{ for all }n\ge 0\}.
\]
\end{definition}

For the scaling transformation on the complexity space, the stable
sets are easy to describe.

\begin{theorem}[Stable sets = complexity classes]\label{thm:stable-sets}
For $\alpha>1$, the $\delta$-stable set of $f$ under $\psi_\alpha$
in the complexity quasi-metric is
\[
  S_{d_\C}(f,\delta,\psi_\alpha)
  \;=\; \{g\in\C : d_\C(f,g)\le\delta\}.
\]
In particular, this set contains all $g$ with $g(n)\ge f(n)$ for
every~$n$---that is, all functions that are ``at least as slow
as~$f$.''
\end{theorem}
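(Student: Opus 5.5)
The plan is to prove the set equality by double inclusion, reducing everything to the exact scaling identity of Lemma~\ref{lem:scaling-lip}. Iterating that lemma gives, for every $n\ge 0$ and all $f,g\in\C$,
\[
  d_\C(\psi_\alpha^n(f),\psi_\alpha^n(g)) \;=\; \alpha^{-n}\,d_\C(f,g).
\]
This is the induction already recorded after the lemma. The base case $n=0$ is trivial, since $\psi_\alpha^0=\mathrm{id}$.

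For the inclusion $S_{d_\C}(f,\delta,\psi_\alpha)\subseteq\{g:d_\C(f,g)\le\delta\}$, I will simply specialise the defining condition of Definition~\ref{def:stable} to $n=0$. This gives $d_\C(f,g)\le\delta$ directly, and it needs neither $\alpha>1$ nor the lemma.

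For the reverse inclusion, suppose $d_\C(f,g)\le\delta$. The displayed identity gives $d_\C(\psi_\alpha^n(f),\psi_\alpha^n(g))=\alpha^{-n}d_\C(f,g)$. Since $\alpha>1$, we have $\alpha^{-n}\le 1$ for all $n\ge 0$, so this quantity is at most $d_\C(f,g)\le\delta$. Hence $g$ lies in the stable set. This is the only place the hypothesis $\alpha>1$ is used. For $\alpha<1$ the factor $\alpha^{-n}$ blows up, and the stable set would shrink to $\{g:d_\C(f,g)=0\}$; I will mention this only as a remark.

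For the final assertion, take $g$ with $g(n)\ge f(n)$ for every $n$. Theorem~\ref{thm:dc-props}(ii) gives $d_\C(f,g)=0\le\delta$, so $g$ belongs to the ball and therefore to the stable set, for every $\delta>0$. There is no genuine obstacle. The only point needing care is the direction of the arguments: the definition uses $q(\psi^n(f),\psi^n(g))$ with $f$ first, and Theorem~\ref{thm:dc-props}(ii) must be applied in that same order, so that it is the slower functions $g\ge f$ that lie in the set.
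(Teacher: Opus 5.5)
Your proof is correct and follows the paper's argument: both use the iterated scaling identity $d_\C(\psi_\alpha^n(f),\psi_\alpha^n(g))=\alpha^{-n}d_\C(f,g)$, note that for $\alpha>1$ the supremum over $n\ge 0$ is attained at $n=0$, and then apply Theorem~\ref{thm:dc-props}(ii) with the arguments in the correct order. Your double-inclusion layout, including the observation that the inclusion coming from $n=0$ needs no hypothesis on $\alpha$, is just a more explicit version of the paper's one-line chain of equivalences.
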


\begin{proof}
For $n\ge 0$, $d_\C(\psi_\alpha^n(f),\psi_\alpha^n(g))
=\alpha^{-n}d_\C(f,g)$.  Since $\alpha>1$, this is a decreasing
sequence, maximized at $n=0$ where it equals $d_\C(f,g)$.
Therefore
\[
  g\in S_{d_\C}(f,\delta,\psi_\alpha)
  \;\Longleftrightarrow\;
  \sup_{n\ge 0}\alpha^{-n}d_\C(f,g)\le\delta
  \;\Longleftrightarrow\;
  d_\C(f,g)\le\delta.
\]
If $g(n)\ge f(n)$ for all~$n$, then $d_\C(f,g)=0\le\delta$, so
$g$ is in the stable set.
\end{proof}

\begin{example}[Stable set of $f(n)=n$]\label{ex:stable-lin}
Take $f(n)=n$, $\alpha=2$, $\delta=0.1$. The stable set includes:
\begin{itemize}
\item $g_1(n)=n^2$ ($d_\C(f,g_1)=0\le 0.1$)
\item $g_2(n)=2n$ ($d_\C(f,g_2)=0\le 0.1$)
\item $g_3(n)=n+10$ ($d_\C(f,g_3)=0\le 0.1$)
\item $g_4(n)=n\sqrt{n}$ ($d_\C(f,g_4)=0\le 0.1$, since $n\le n\sqrt{n}$)
\end{itemize}
However, $h(n)=\sqrt{n}$ is NOT in the stable set because
$d_\C(f,h)\approx 0.113 > 0.1$.
\end{example}

\begin{example}[Counterexample: stability depends on $\delta$]\label{ex:counter-stable}
For $f(n)=n$, $g(n)=\sqrt{n}$, and $\alpha=2$, we have
$d_\C(f,g)=\sum_{n=2}^\infty 2^{-n}\bigl(\frac{1}{\sqrt{n}}-\frac{1}{n}\bigr)
\approx 0.113$ (since $\sqrt{n}<n$ for $n\ge 2$).
\begin{itemize}
\item If $\delta=0.2$, then $g\in S_{d_\C}(f,0.2,\psi_2)$ since $d_\C(f,g)\approx 0.113 < 0.2$.
\item If $\delta=0.05$, then $g\notin S_{d_\C}(f,0.05,\psi_2)$ since $d_\C(f,g)\approx 0.113 > 0.05$.
\end{itemize}
This shows the $\delta$-stable set shrinks as $\delta$ decreases
(see \href{\repourl/blob/main/code/sagemath/separation_iterates.sage}{\texttt{separation\_iterates.sage}}
for verification).
\end{example}

\begin{insightbox}[Complexity-theoretic interpretation]
The stable set $S_{d_\C}(f,\delta,\psi_\alpha)$ is the
``neighbourhood of slower functions around~$f$.''  In complexity
theory terms, it is a kind of asymptotic complexity class: it
contains all functions whose running time is ``close to or worse
than'' that of~$f$, where closeness is measured by $d_\C$.
\end{insightbox}

\subsection{Unstable sets}

\begin{definition}[Unstable set]\label{def:unstable}
The \emph{$\delta$-unstable set} of $f$ is
\[
  U_q(f,\delta,\psi)
  \;=\; \{g\in X : q(\psi^{-n}(f),\psi^{-n}(g))\le\delta
    \text{ for all }n\ge 0\}.
\]
\end{definition}

\begin{theorem}[Unstable sets]\label{thm:unstable-sets}
For $\alpha>1$, the $\delta$-unstable set of $f$ under $\psi_\alpha$
with respect to the \emph{conjugate} quasi-metric $d_\C^t$ is
\[
  U_{d_\C^t}(f,\delta,\psi_\alpha)
  \;=\; \{g\in\C : d_\C(g,f)=0\}.
\]
In particular, this set is independent of~$\delta$ and equals the set
of all $g$ with $g(n)\le f(n)$ for every~$n$---that is,
all functions that are ``at least as fast as~$f$.''
\end{theorem}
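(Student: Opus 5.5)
The plan is to reduce membership in $U_{d_\C^t}(f,\delta,\psi_\alpha)$ to a condition on one number, $d_\C(g,f)$, exactly as in the proof of Theorem~\ref{thm:stable-sets}. The difference is that along backward iterates this number is \emph{amplified} rather than damped. The amplification forces it to vanish, and that is why $\delta$ drops out.

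\textbf{Step 1: unwind the definitions.} By Definition~\ref{def:unstable} with $q=d_\C^t$, $g$ lies in the unstable set iff
\[
d_\C^t(\psi_\alpha^{-n}(f),\psi_\alpha^{-n}(g))\le\delta \quad\text{for all } n\ge 0 .
\]
By Definition~\ref{def:conjugate} and Proposition~\ref{prop:conjugate-dc}, the left-hand side equals $d_\C(\psi_\alpha^{-n}(g),\psi_\alpha^{-n}(f))$. Getting the argument order right here is the only point needing care.

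\textbf{Step 2: compute the backward distances.} Since $\psi_\alpha^{-1}=\psi_{1/\alpha}$ (Lemma~\ref{lem:psi-homeo}), iterating Lemma~\ref{lem:scaling-lip} gives identity~\eqref{eq:backward-iterate-bound} from the proof of Theorem~\ref{thm:main-scaling}. Applied with $(h_1,h_2)=(g,f)$ it yields
\[
d_\C(\psi_\alpha^{-n}(g),\psi_\alpha^{-n}(f))=\alpha^{n}\,d_\C(g,f)\quad\text{for all } n\ge 0 .
\]
So membership is equivalent to $\sup_{n\ge 0}\alpha^n d_\C(g,f)\le\delta$.

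\textbf{Step 3: the dichotomy.} This is the step I expect to carry the content, though it is elementary. If $d_\C(g,f)=0$, every term is $0\le\delta$, so $g$ belongs to the set. If $d_\C(g,f)>0$, then because $\alpha>1$ we have $\alpha^n\to\infty$, so $\alpha^n d_\C(g,f)>\delta$ for some $n$, and $g$ is excluded. Concretely, any $n>\log_\alpha(\delta/d_\C(g,f))$ works, in the spirit of Proposition~\ref{prop:sep-iterate}. Hence
\[
U_{d_\C^t}(f,\delta,\psi_\alpha)=\{g\in\C: d_\C(g,f)=0\}
\]
for every $\delta>0$, and in particular the set is independent of $\delta$.

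\textbf{Step 4: identify the set.} Finally, Theorem~\ref{thm:dc-props}(ii) with the roles of its two arguments set to $(g,f)$ says $d_\C(g,f)=0$ iff $g(n)\le f(n)$ for all $n$. This gives the description of the set as all functions at least as fast as $f$.
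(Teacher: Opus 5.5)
Your proposal is correct and follows essentially the same route as the paper's proof. Both rewrite the $d_\C^t$ condition as $d_\C(\psi_\alpha^{-n}(g),\psi_\alpha^{-n}(f))\le\delta$, use the scaling identity to reduce it to $\alpha^n d_\C(g,f)\le\delta$ for all $n\ge 0$ (which forces $d_\C(g,f)=0$ since $\alpha>1$), and conclude with Theorem~\ref{thm:dc-props}(ii).
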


\begin{proof}
By definition,
$g\in U_{d_\C^t}(f,\delta,\psi_\alpha)$ iff
$d_\C^t(\psi_\alpha^{-n}(f),\psi_\alpha^{-n}(g))\le\delta$
for all $n\ge 0$.  Since
$d_\C^t(h_1,h_2)=d_\C(h_2,h_1)$, this becomes
$d_\C(\psi_\alpha^{-n}(g),\psi_\alpha^{-n}(f))\le\delta$ for all
$n\ge 0$.  By Lemma~\ref{lem:scaling-lip},
\[
  d_\C(\psi_\alpha^{-n}(g),\psi_\alpha^{-n}(f))
  \;=\;\alpha^n\,d_\C(g,f).
\]
Since $\alpha>1$, this is an increasing sequence.  If
$d_\C(g,f)>0$, then $\alpha^n d_\C(g,f)\to\infty$, which
eventually exceeds~$\delta$.  Hence $g$ is in the unstable set if and
only if $d_\C(g,f)=0$.

By Theorem~\ref{thm:dc-props}(ii), $d_\C(g,f)=0$ if and only if
$g(n)\le f(n)$ for all~$n$, so the unstable set consists precisely
of the functions that are pointwise at most~$f$.
\end{proof}

\begin{example}[Unstable set of $f(n)=n^2$]\label{ex:unstable-quad}
Take $f(n)=n^2$, $\alpha=2$, $\delta=0.1$. By Theorem~\ref{thm:unstable-sets},
the unstable set consists of all $g$ with $d_\C(g,f)=0$, i.e., $g(n)\le f(n)=n^2$
for all~$n$.  It includes:
\begin{itemize}
\item $g_1(n)=n$ ($n\le n^2$ for all $n\ge 1$, so $d_\C(g_1,f)=0$)
\item $g_2(n)=n\log(n+1)$ ($n\log(n+1)\le n^2$ for all $n\ge 1$, so $d_\C(g_2,f)=0$)
\item $g_3(n)=n^{1.5}$ ($n^{1.5}\le n^2$ for all $n\ge 1$, so $d_\C(g_3,f)=0$)
\end{itemize}
But $h(n)=2^n$ is NOT in the unstable set because $2^n>n^2$ for
large~$n$, so $d_\C(h,f)>0$.
\end{example}

\begin{insightbox}[Duality of stable and unstable sets]
The stable set contains the functions \emph{close to or slower}
than~$f$ (a $d_\C$-neighbourhood); the unstable set contains
exactly the functions \emph{pointwise faster} than~$f$ (the
zero-set of $d_\C(\cdot,f)$).  Stable sets depend on~$\delta$;
unstable sets do not.  The asymmetry tracks the conjugate pair
$d_\C\leftrightarrow d_\C^t$: forward iterates of $\psi_\alpha$
contract distances and preserve the stable neighbourhood, while
backward iterates expand them and collapse the unstable set to
its core.
\end{insightbox}

\begin{figure}[H]
\centering
\begin{tikzpicture}[scale=0.8]
    % The center function f
    \draw[thick, ->] (0,0) -- (12,0) node[right] {$n$};
    \draw[thick, ->] (0,0) -- (0,7) node[above] {running time};
    \draw[very thick, purple, domain=0.5:11.5, samples=50]
         plot (\x, {0.5*\x + 2}) node[right, font=\small] {$f(n)$};
    
    % Stable set (slower functions)
    \fill[red!10, opacity=0.7, domain=0.5:11.5] 
         plot (\x, {0.5*\x + 2}) -- plot[domain=11.5:0.5] (\x, {0.8*\x^0.8 + 4}) -- cycle;
    \draw[thick, red!70, domain=0.5:11.5, samples=50, dashed] 
         plot (\x, {0.8*\x^0.8 + 4}) node[right, font=\small] {slower functions};
    \node[red!70, font=\small, align=center] at (8,6.5) {Stable set\\$S_{d_\C}(f,\delta,\psi_\alpha)$};
    
    % Unstable set (faster functions)
    \fill[blue!10, opacity=0.7, domain=0.5:11.5] 
         plot (\x, {0.5*\x + 2}) -- plot[domain=11.5:0.5] (\x, {0.3*\x + 1}) -- cycle;
    \draw[thick, blue!70, domain=0.5:11.5, samples=50, dashed] 
         plot (\x, {0.3*\x + 1}) node[right, font=\small] {faster functions};
    \node[blue!70, font=\small, align=center] at (8,1.5) {Unstable set\\$U_{d_\C^t}(f,\delta,\psi_\alpha)$};
    
    % Labels
    \node[font=\footnotesize, align=center] at (3,5.5) {Functions slower than $f$\\$d_\C(f,g)\le\delta$};
    \node[font=\footnotesize, align=center] at (3,0.8) {Functions faster than $f$\\$d_\C(g,f)=0$};
\end{tikzpicture}
\caption{Stable and unstable sets for a function $f$. The stable set contains
functions that are asymptotically slower than $f$; the unstable set contains
functions that are asymptotically faster than $f$.}
\label{fig:stable-unstable}
\end{figure}
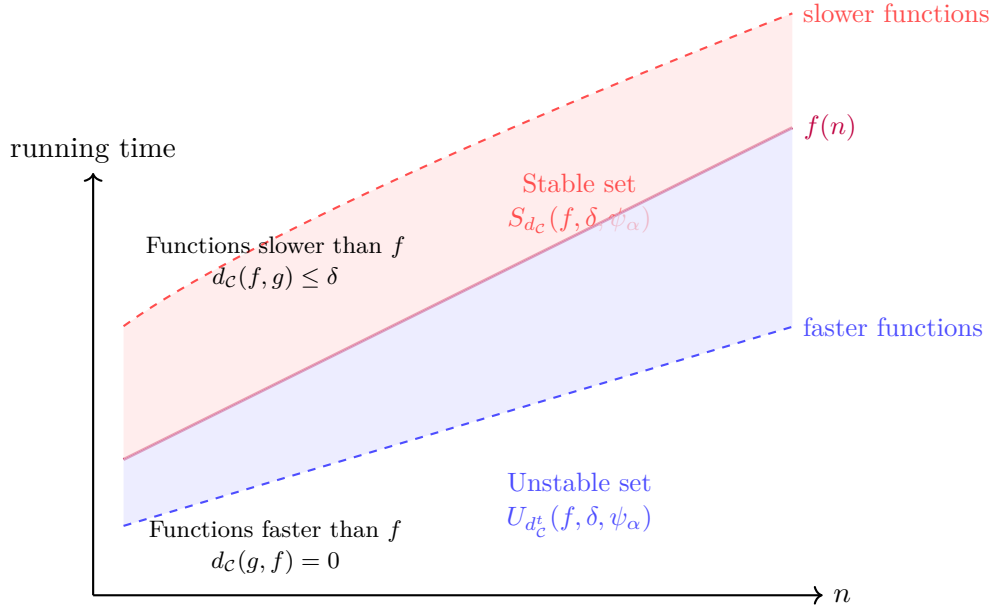

\subsection{Algorithm for stable-set membership}

The following algorithm checks whether a candidate function $g$
belongs to the $\delta$-stable set of~$f$.

\begin{algorithm}[H]
\DontPrintSemicolon
\SetAlgoLined
\KwIn{$f$; candidate $g$; $\alpha$; $\delta$; forward steps $M$}
\KwOut{\texttt{True} if $g\in S(f,\delta,\psi_\alpha)$}
\For{$n \leftarrow 0$ \KwTo $M$}{
    $s \leftarrow \alpha^n$\;
    $d \leftarrow d_\C(s\cdot f,\;s\cdot g)$\;
    \If{$d > \delta$}{
        \Return{\texttt{False}}\;
    }
}
\Return{\texttt{True}}\;
\caption{Membership in $\delta$-stable set}
\label{alg:stable}
\end{algorithm}

\begin{example}[Algorithm~\ref{alg:stable} in action]\label{ex:alg-stable-trace}
We trace Algorithm~\ref{alg:stable} for $f(n)=n$, $g(n)=\sqrt{n}$,
$\alpha=2$, $\delta=0.1$, $M=3$.  At each step $n$, the scaled
functions are $s\cdot f$ and $s\cdot g$ where $s=\alpha^n$:
\begin{center}
\begin{tabular}{cccc}
\hline
$n$ & $s=2^n$ & $d_\C(s\cdot f,\, s\cdot g)$ & $d>\delta$? \\\hline
$0$ & $1$ & $d_\C(n,\sqrt{n})\approx 0.113$ & Yes \\
\hline
\end{tabular}
\end{center}
The algorithm returns \texttt{False} at $n=0$: $g=\sqrt{n}$ is
\emph{not} in the stable set because $d_\C(f,g)\approx 0.113>0.1$.
In contrast, for $g(n)=2n$: $d_\C(n,2n)=0\le 0.1$ at $n=0$, and
$d_\C(2^n \cdot n, 2^n\cdot 2n)=0\le 0.1$ for all subsequent $n$.
The algorithm returns \texttt{True}: $g=2n$ is in the stable set.
\end{example}

\begin{example}[Stable set of an exponential function]\label{ex:stable-exp}
Take $f(n)=2^n$, $\alpha=2$, $\delta=0.01$.  Since $d_\C(f,g)=0$
whenever $g(n)\ge 2^n$ for all~$n$, the stable set includes all
super-exponential functions, such as $g(n)=3^n$, $g(n)=n!$, and
$g(n)=2^{n^2}$.  However, $h(n)=n^{100}$ is NOT in the stable set:
for large~$n$, $n^{100}<2^n$, so $d_\C(f,h)>0$.  The numerical
value $d_\C(f,h)$ is extremely close to~$1$ --- the maximum
attainable on the subspace $\C_{\ge 1}$, in which $h$ lies since
$h(n)=n^{100}\ge 1$ for all $n\ge 1$ --- reflecting the vast gap
between polynomial and exponential growth.  This shows
that even a degree-$100$ polynomial is far from the stable set of an
exponential function.
\end{example}

\begin{example}[Unstable set of a linear function]\label{ex:unstable-linear}
Take $f(n)=n$, $\alpha=2$, $\delta=0.1$.  By
Theorem~\ref{thm:unstable-sets}, the unstable set consists of all
$g$ with $g(n)\le n$ for all~$n$.  This includes:
\begin{itemize}
\item $g(n)=\log(n+1)$ (logarithmic is faster than linear)
\item $g(n)=\sqrt{n}$ (sub-linear)
\item $g(n)=1$ (constant time)
\item $g(n)=n/(n+1)$ (bounded, approaching $1$)
\end{itemize}
But $h(n)=n+1$ is NOT in the unstable set: $h(1)=2>1=f(1)$, so
$d_\C(h,f)>0$.  Remarkably, adding just $+1$ to a linear function
ejects it from the unstable set.  This sensitivity reflects the
pointwise nature of the condition $g(n)\le f(n)$ for \emph{all}~$n$.
\end{example}

\begin{example}[Intersection of stable and unstable sets]\label{ex:stable-unstable-intersect}
For $f(n)=n$ and $\alpha=2$, the stable set (with $\delta>0$)
contains all $g$ with $d_\C(f,g)\le\delta$, while the unstable set
contains all $g$ with $g(n)\le n$ for all~$n$.  A function $g$
lies in both sets if and only if $g(n)\le n$ for all~$n$ (unstable
condition) and $d_\C(f,g)\le\delta$ (stable condition).  Since
$g(n)\le f(n)=n$ implies $d_\C(f,g)=0\le\delta$, the intersection
equals the unstable set itself: $S\cap U = U$.  This is a general
phenomenon: for $\alpha>1$, the unstable set is always contained
in every $\delta$-stable set.
\end{example}

A Python implementation is given in
\href{\repourl/blob/main/code/python/stable_set.py}{\texttt{stable\_set.py}}.

% ═══════════════════════════════════════════════════════════════
\section{Canonical coordinates and hyperbolicity}
\label{sec:hyperbolicity}
% ═══════════════════════════════════════════════════════════════

\subsection{Background}

In classical smooth dynamics, hyperbolicity is the structural
property that governs much of the chaotic behaviour one wants to
understand.  A diffeomorphism on a compact manifold is
\emph{hyperbolic} (or \emph{Anosov}) when the tangent bundle
splits into stable and unstable sub-bundles, with the derivative
contracting on one and expanding on the other.
Bowen~\cite{bowen1975} showed that Anosov diffeomorphisms admit
Markov partitions and satisfy strong statistical properties:
equilibrium states exist and entropy formulas are exact.  This is
why hyperbolicity occupies such a central place in ergodic theory.

Reddy~\cite{reddy1983} later showed that these conclusions extend
beyond the smooth setting: any expansive homeomorphism on a
compact metric space that admits \emph{canonical coordinates}---a
local product structure in which nearby points decompose uniquely
along stable and unstable directions---is hyperbolic.  Reddy's
theorem gives exponential contraction along one factor and
exponential expansion along the other.

The complexity quasi-metric space $(\C,d_\C)$ is not compact and
$d_\C$ is not symmetric, so neither Bowen's smooth theory nor
Reddy's topological generalization applies directly.  However,
the algebraic structure of $\psi_\alpha$ is explicit enough that
hyperbolicity can be verified by a direct computation, and the
result is in fact stronger than what the classical theory gives:
we obtain \emph{exact} geometric decay and growth, with constant
$C=1$, not just an exponential bound.

\subsection{Statement and proof}

\begin{corollary}[Exact contraction rate]\label{cor:exact-contraction}
For every $\alpha>0$ and every $f,g\in\C$,
\[
  d_\C(\psi_\alpha^n(f),\psi_\alpha^n(g))\;=\;\alpha^{-n}\,d_\C(f,g)
  \qquad\text{for all } n\ge 0.
\]
\end{corollary}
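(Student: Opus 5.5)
The plan is a straightforward induction on $n$, using Lemma~\ref{lem:scaling-lip} as the inductive step. For $n=0$ the identity reads $d_\C(f,g)=\alpha^0 d_\C(f,g)$, which is trivial because $\psi_\alpha^0=\mathrm{id}$. Assume the identity holds for some $n\ge 0$ and all $f,g\in\C$. Write $\psi_\alpha^{n+1}=\psi_\alpha\circ\psi_\alpha^n$, and note that $\psi_\alpha^n(f),\psi_\alpha^n(g)\in\C$ by Lemma~\ref{lem:psi-homeo}, since $\psi_\alpha$ maps $\C$ into itself. Applying Lemma~\ref{lem:scaling-lip} to this pair and then the induction hypothesis gives
\[
  d_\C(\psi_\alpha^{n+1}(f),\psi_\alpha^{n+1}(g))
  =\tfrac{1}{\alpha}\,d_\C(\psi_\alpha^{n}(f),\psi_\alpha^{n}(g))
  =\alpha^{-(n+1)}\,d_\C(f,g).
\]

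As a cross-check, and as an alternative one-line proof, we can compute directly from Definition~\ref{def:scaling}. There $\psi_\alpha^n(f)(m)=\alpha^n f(m)$, so each summand of $d_\C$ becomes
\[
  2^{-m}\max\{0,\alpha^{-n}(1/g(m)-1/f(m))\}=\alpha^{-n}\,2^{-m}\max\{0,1/g(m)-1/f(m)\}.
\]
This holds because $\alpha^{-n}>0$ commutes with the positive part. Summing the convergent series term by term then gives the identity.

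There is no real obstacle. The only points needing care are checking that the iterates stay in $\C$, so that $d_\C$ is defined and finite at every step, and noting that the argument uses nothing about $\alpha>1$: it holds for every $\alpha>0$, including $\alpha=1$, where both sides are constant. For the "optimal constant $C=1$" claim mentioned in the introduction, equality itself shows that no bound $C\alpha^{-n}d_\C(f,g)$ with $C<1$ can hold whenever $d_\C(f,g)>0$. A single pair such as $f(n)=n^2$, $g(n)=n$ from Example~\ref{ex:lin-quad} witnesses this.
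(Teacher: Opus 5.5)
Your proposal is correct and matches the paper's proof, which is exactly an induction on $n$ with Lemma~\ref{lem:scaling-lip} supplying the inductive step. Your direct term-by-term computation is a valid one-line alternative, and your remarks on the iterates staying in $\C$ and on the sharpness of $C=1$ are accurate.
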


\begin{proof}
Induction on $n$, using Lemma~\ref{lem:scaling-lip} at each step.
\end{proof}

\begin{remark}[On the term ``hyperbolicity'']\label{rem:hyperbolicity}
Corollary~\ref{cor:exact-contraction} gives an \emph{exact}
contraction rate, not merely an exponential bound; in particular,
the constant in front of the exponential is~$1$.  The terminology
``hyperbolic'' is often used in this kind of one-sided contraction
setting (e.g., Reddy~\cite{reddy1983}), but it should be emphasized
that we do \emph{not} construct a stable/unstable splitting with a
local product structure in the sense of classical hyperbolic
dynamics: the dynamics here is conformal across all of~$\C$.  What
we obtain is the explicit algebraic form of the contraction, which
the abstract theory only delivers as an exponential bound.
\end{remark}

\begin{example}[Hyperbolic contraction for $\alpha=2$]\label{ex:hyperbolic-alpha2}
Take $f(n)=n^3$, $g(n)=n^2$, $\alpha=2$.  Since $f(n)\ge g(n)$
for all $n\ge 1$, we have $d_\C(f,g)=\sum_{n=2}^\infty
2^{-n}\frac{n-1}{n^3}>0$; the partial sums give $S_2=0.031$,
$S_3=0.041$, $S_5=0.044$, converging to $d_\C(f,g)\approx 0.045$.
Then:
\begin{align*}
d_\C(f,g) &\approx 0.045 \\
d_\C(\psi_2(f),\psi_2(g)) &= \tfrac{1}{2} \cdot 0.045 \approx 0.023 \\
d_\C(\psi_2^2(f),\psi_2^2(g)) &= \tfrac{1}{4} \cdot 0.045 \approx 0.011 \\
d_\C(\psi_2^3(f),\psi_2^3(g)) &= \tfrac{1}{8} \cdot 0.045 \approx 0.006
\end{align*}
The distances contract exactly by factor $1/2$ at each step;
see \href{\repourl/blob/main/code/sagemath/hyperbolic_contraction.sage}{\texttt{hyperbolic\_contraction.sage}}
for the full sequence.
\end{example}

\begin{example}[Counterexample: $\alpha=1$ is not hyperbolic]\label{ex:counter-hyperbolic}
For $\alpha=1$, $\psi_1$ is the identity, so:
\[
d_\C(\psi_1^n(f),\psi_1^n(g)) = d_\C(f,g) \quad \text{for all } n.
\]
There is no contraction or expansion—the distance remains constant.
Thus $\psi_1$ is not hyperbolic.
\end{example}

\begin{figure}[H]
\centering
\begin{tikzpicture}[scale=0.9, >=stealth]
    \draw[thick, ->] (-0.7,0) -- (10.8,0) node[right] {iterate $n$};
    \draw[thick, ->] (0,-0.7) -- (0,5.0) node[above] {$d_\C(\psi^n(f),\psi^n(g))$};
    % Exponential decay curve
    \draw[very thick, blue!70!black, domain=0:9.8, samples=50]
         plot (\x, {4.0*exp(-0.35*\x)});
    % Reference line
    \draw[dashed, gray] (0,4.0) -- (0.5,4.0) node[left=8pt, font=\small] {$d_\C(f,g)$};
    % Lambda annotation
    \draw[thick, <->, red!70!black] (2.5,{4.0*exp(-0.875)}) -- (2.5,{4.0*exp(-0.35)});
    \node[right, red!70!black, font=\small] at (2.6,{0.5*(4.0*exp(-0.875)+4.0*exp(-0.35))})
         {factor $1/\alpha$};
    % Points
    \foreach \k in {0,1,...,9} {
        \fill[blue!70!black] (\k, {4.0*exp(-0.35*\k)}) circle (3pt);
    }
    \node[font=\small, gray] at (6.5,4.0) {Exponential contraction: $\lambda = 1/\alpha$};
    % Annotations for specific points
    \node[font=\tiny, below] at (0,0) {0};
    \node[font=\tiny, below] at (1,0) {1};
    \node[font=\tiny, below] at (2,0) {2};
    \node[font=\tiny, below] at (3,0) {3};
    \draw[dashed, gray!50] (1,0) -- (1,{4.0*exp(-0.35)}) node[circle, fill=blue!70!black, inner sep=1.5pt]{};
    \draw[dashed, gray!50] (2,0) -- (2,{4.0*exp(-0.7)}) node[circle, fill=blue!70!black, inner sep=1.5pt]{};
    \draw[dashed, gray!50] (3,0) -- (3,{4.0*exp(-1.05)}) node[circle, fill=blue!70!black, inner sep=1.5pt]{};
\end{tikzpicture}
\caption{Exponential contraction of distances under forward iteration
of $\psi_\alpha$ ($\alpha>1$).  The distance decays geometrically
with ratio $1/\alpha$.}
\label{fig:hyperbolicity}
\end{figure}
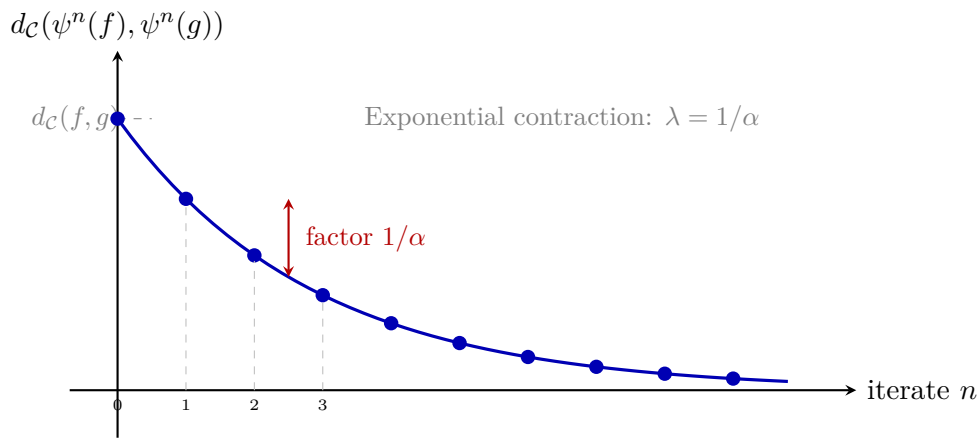

\begin{remark}[Sharpness]
The constant $C=1$ in Corollary~\ref{cor:exact-contraction} is optimal:
the decay is \emph{exactly} geometric, not merely bounded by a
geometric sequence.  This is a consequence of the exact scaling
property of Lemma~\ref{lem:scaling-lip}.
\end{remark}

\begin{example}[Numerical verification]\label{ex:hyper-numerical}
Let $f(n)=n^2$, $g(n)=n$, and $\alpha=2$.  Since $f(n)\ge g(n)$,
we have $d_\C(f,g)=d_0\approx 0.111$.  After $n$~iterates:
\[
  d_\C(\psi_2^n(f),\psi_2^n(g))
  = 2^{-n}\cdot d_0.
\]
At $n=5$, the predicted distance is $d_0/32\approx 0.00347$, which
matches the numerical computation to full floating-point precision.
See
\href{\repourl/blob/main/code/python/hyperbolicity.py}{\texttt{hyperbolicity.py}} and
\href{\repourl/blob/main/code/sagemath/hyperbolic_contraction.sage}{\texttt{hyperbolic\_contraction.sage}}
for verification.
\end{example}

\subsection{Backward iterates: expansion}

Forward iterates contract; backward iterates expand:
\[
  d_\C(\psi_\alpha^{-n}(f),\psi_\alpha^{-n}(g))
  = \alpha^n\,d_\C(f,g).
\]
Contraction in one time direction and expansion in the other is
the defining feature of hyperbolic dynamics.

\begin{corollary}\label{cor:backward-expansion}
For $\alpha>1$ and $d_\C(f,g)>0$, the backward orbit distances
grow exponentially:
$d_\C(\psi_\alpha^{-n}(f),\psi_\alpha^{-n}(g))\to\infty$ as
$n\to\infty$.
\end{corollary}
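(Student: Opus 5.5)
The plan is to reduce the statement to the exact backward-iterate identity already recorded in \eqref{eq:backward-iterate-bound} in the proof of Theorem~\ref{thm:main-scaling}, and then take a limit. First I note that $\psi_\alpha^{-1}=\psi_{1/\alpha}$ (Lemma~\ref{lem:psi-homeo}). It follows that $\psi_\alpha^{-n}=\psi_{\alpha^{-n}}$, by the group structure $\psi_\alpha\circ\psi_\beta=\psi_{\alpha\beta}$. Applying Lemma~\ref{lem:scaling-lip} with scaling parameter $\alpha^{-n}$ gives, in one step,
\[
  d_\C\bigl(\psi_\alpha^{-n}(f),\psi_\alpha^{-n}(g)\bigr)
  \;=\;\frac{1}{\alpha^{-n}}\,d_\C(f,g)\;=\;\alpha^{n}\,d_\C(f,g)
  \qquad(n\ge 0).
\]
Alternatively, I could induct on $n$ using Lemma~\ref{lem:scaling-lip} with $\alpha\mapsto 1/\alpha$ at each step, exactly as in Corollary~\ref{cor:exact-contraction}.

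Second, with $d:=d_\C(f,g)>0$ and $\alpha>1$, we have $\alpha^n\to\infty$, so $\alpha^n d\to\infty$. For a quantitative version I will cite Proposition~\ref{prop:sep-iterate}: for any $M>0$, every $n\ge\lceil\log_\alpha(M/d)\rceil$ gives a distance exceeding $M$. This is precisely divergence to infinity.

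There is no real obstacle here. The only point needing care is that the identity is an equality rather than a bound, so monotone growth is automatic. It also matters that the hypothesis is on $d_\C(f,g)$ in this order; if instead only $d_\C(g,f)>0$, the same argument applies to the reversed pair, i.e.\ to $d_\C^t$. Finally, it is worth remarking that this shows $d_\C$, although bounded by $w_\C$ on fixed pairs (Theorem~\ref{thm:dc-props}(iii)), is unbounded along backward orbits. This is consistent, because $w_\C(\psi_\alpha^{-n}(g))=\alpha^n w_\C(g)$ also grows.
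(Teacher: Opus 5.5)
Your argument is correct and matches the paper's proof: the paper also rewrites $\psi_\alpha^{-n}$ as a forward iterate of $\psi_{1/\alpha}$ (you use the equivalent $\psi_{\alpha^{-n}}$) and reads off $d_\C(\psi_\alpha^{-n}(f),\psi_\alpha^{-n}(g))=\alpha^n d_\C(f,g)\to\infty$ from Lemma~\ref{lem:scaling-lip}. One small caveat on your optional quantitative step: when $\log_\alpha(M/d)$ is an integer, the threshold $\lceil\log_\alpha(M/d)\rceil$ from Proposition~\ref{prop:sep-iterate} gives only equality $\alpha^n d=M$ at $n=\lceil\log_\alpha(M/d)\rceil$, so use $\lfloor\log_\alpha(M/d)\rfloor+1$ for strict inequality; the divergence conclusion is unaffected.
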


\begin{proof}
Since $\psi_\alpha^{-1}=\psi_{1/\alpha}$, we have
$d_\C(\psi_\alpha^{-n}(f),\psi_\alpha^{-n}(g))
=d_\C(\psi_{1/\alpha}^n(f),\psi_{1/\alpha}^n(g))
=(1/\alpha)^{-n}\,d_\C(f,g)=\alpha^n\,d_\C(f,g)\to\infty$.
\end{proof}

\begin{example}[Backward expansion: numerical illustration]\label{ex:backward-num}
Let $f(n)=n^2$, $g(n)=n$, and $\alpha=3$.  Then
$d:=d_\C(f,g)\approx 0.111$.  The backward orbit distances are:
\begin{center}
\begin{tabular}{ccc}
\hline
$k$ & $d_\C(\psi_3^{-k}(f),\psi_3^{-k}(g))$ & Value \\\hline
$0$ & $d$    & $0.111$ \\
$1$ & $3d$   & $0.333$ \\
$2$ & $9d$   & $0.999$ \\
$3$ & $27d$  & $2.997$ \\
$4$ & $81d$  & $8.991$ \\
$5$ & $243d$ & $26.97$ \\
\hline
\end{tabular}
\end{center}
The values grow unboundedly: this is the dynamical signature of
backward expansion, and matches the Lipschitz identity of
Lemma~\ref{lem:scaling-lip} (or its iteration in
Corollary~\ref{cor:backward-expansion}) exactly.
See
\href{\repourl/blob/main/code/sagemath/hyperbolic_contraction.sage}{\texttt{hyperbolic\_contraction.sage}}
for the full computation.
\end{example}

\begin{example}[Counterexample: no expansion when $d_\C(f,g)=0$]
\label{ex:no-backward-exp}
Let $f(n)=n$ and $g(n)=n^2$ with $\alpha=2$.  Since $f(n)\le g(n)$
for all $n\ge 1$, we have $d_\C(f,g)=0$.  Hence
$d_\C(\psi_2^{-n}(f),\psi_2^{-n}(g))=2^n\cdot 0=0$ for all~$n$.
The backward iterates produce no expansion in the $d_\C$~direction.
However, $d_\C(g,f)\approx 0.111>0$, so the \emph{conjugate}
backward distances $d_\C^t(\psi_2^{-n}(f),\psi_2^{-n}(g))=2^n\cdot
0.111\to\infty$ do expand.  This asymmetry is characteristic of
quasi-metric dynamics.
\end{example}

% ═══════════════════════════════════════════════════════════════
\section{Connection to the hierarchy theorem}
\label{sec:hierarchy}
% ═══════════════════════════════════════════════════════════════

The \emph{time hierarchy theorem} of Hartmanis and
Stearns~\cite{hartmanis1965} says that more time really does buy
strictly more problems.  Specifically, if $f(n)\log f(n)=o(g(n))$,
then $\mathrm{DTIME}(f(n))\subsetneq\mathrm{DTIME}(g(n))$: there
are problems solvable in time $g(n)$ that cannot be solved in
time $f(n)$.

That classical theorem has a clean dynamical counterpart: the
hierarchy gap shows up as orbit separation under the scaling
transformation.

\begin{theorem}[Hierarchy as orbit separation]\label{thm:hierarchy}
Let $f,g\in\C$ with $d_\C(g,f)>0$ (informally, $g$ is not pointwise
at least as fast as $f$).  Then for every $\alpha>0$ with
$\alpha\ne 1$, the orbits $\{\psi_\alpha^n(f)\}_{n\in\Z}$ and
$\{\psi_\alpha^n(g)\}_{n\in\Z}$ are eventually separated in
$d_\C^s$: for every $\delta>0$ there exists $N\in\Z$ with
$d_\C^s(\psi_\alpha^N(f),\psi_\alpha^N(g))>\delta$.

In particular, the standard time-hierarchy condition
$f(n)\log f(n)=o(g(n))$~\cite{hartmanis1965} is a sufficient
(but, as Remark~\ref{rem:hierarchy} explains, not necessary)
condition for $d_\C(g,f)>0$, and hence for orbit separation.
\end{theorem}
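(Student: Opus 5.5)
The first claim follows directly from the exact scaling identity. Write $c:=d_\C(g,f)>0$. Iterating Lemma~\ref{lem:scaling-lip} (as in Corollary~\ref{cor:exact-contraction} for forward iterates, and via $\psi_\alpha^{-1}=\psi_{1/\alpha}$ as in \eqref{eq:backward-iterate-bound} for backward ones) gives, for every $n\in\Z$,
\[
  d_\C\bigl(\psi_\alpha^n(g),\psi_\alpha^n(f)\bigr)=\alpha^{-n}c .
\]
Since $d_\C^s(h_1,h_2)\ge d_\C(h_2,h_1)$ by Proposition~\ref{prop:conjugate-dc}, it is enough to make $\alpha^{-N}c>\delta$. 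I split into two cases. If $\alpha>1$, take $N=-k$ with $k>\log_\alpha(\delta/c)$, exactly as in Proposition~\ref{prop:sep-iterate}. If $0<\alpha<1$, take $N=k$ with $k>\log_{1/\alpha}(\delta/c)$. In both cases $d_\C^s(\psi_\alpha^N(f),\psi_\alpha^N(g))\ge\alpha^{-N}c>\delta$.

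The second claim is that the Hartmanis--Stearns condition forces $d_\C(g,f)>0$. By Theorem~\ref{thm:dc-props}(ii), it suffices to find a single $n$ with $g(n)>f(n)$. Suppose instead that $g(n)\le f(n)$ for all $n$. Then $f(n)\log f(n)=o(g(n))$ gives $f(n)\log f(n)\le \tfrac12 g(n)\le\tfrac12 f(n)$ for all large $n$, so $\log f(n)\le\tfrac12$ eventually. I will invoke the standing convention of the hierarchy theorem that running times are time-constructible and satisfy $f(n)\to\infty$, or at least $f(n)\ge n$. Under that convention, $\log f(n)\to\infty$, which is a contradiction. Hence $g(n)>f(n)$ for some (indeed all large) $n$, and $d_\C(g,f)\ge 2^{-n}\bigl(1/f(n)-1/g(n)\bigr)>0$.

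The main obstacle is this second step. It needs a hypothesis on $f$ that is implicit in the time hierarchy setting but is not part of membership in $\C$. Without it, $\log f(n)$ could be bounded or negative: if $f$ is bounded with values in $(0,1]$, then $f\log f\le 0$ and $o(g)$ says little. I would state explicitly that we assume $f(n)\ge 2$ eventually with $f$ unbounded. I would also point out that the reverse implication fails: for example, $g=2f$ gives $d_\C(g,f)>0$ without any $\log$ gap. That is the non-necessity remarked on in the statement.
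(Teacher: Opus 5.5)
Your proof is correct. The first half is the paper's proof: the identity $d_\C(\psi_\alpha^n(g),\psi_\alpha^n(f))=\alpha^{-n}d_\C(g,f)$ for all $n\in\Z$, the bound $d_\C^s(h_1,h_2)\ge d_\C(h_2,h_1)$, and $N=-k$ or $N=k$ according as $\alpha>1$ or $\alpha<1$.

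For the Hartmanis--Stearns clause you follow the paper's route: produce some $n$ with $g(n)>f(n)$, then apply Theorem~\ref{thm:dc-props}(ii). The paper simply asserts that $f(n)\log f(n)=o(g(n))$ makes $g$ dominate $f$ for large $n$. You correctly flag that this step needs a growth hypothesis on $f$ that membership in $\C$ does not supply.

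Your caveat is a real correction to the paper, not pedantry. Take $f\equiv g\equiv 1$, both in $\C_{\ge 1}$. Then $f\log f\equiv 0=o(g)$, yet $d_\C(g,f)=0$, so the unqualified clause is false.

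The implication holds for every $g\in\C$ exactly when $f(n)$ does not tend to $1$:
\begin{itemize}
\item If $g\le f$ pointwise, then $|f(n)\log f(n)|/g(n)\ge|\log f(n)|$. This cannot tend to $0$ unless $f(n)\to 1$.
\item Conversely, whenever $f(n)\to 1$, the choice $g=f$ is a counterexample.
\end{itemize}
So your assumption ($f(n)\ge 2$ eventually, or the usual $f(n)\ge n$ of the hierarchy setting) is more than enough. Unboundedness is not needed, since $f(n)\ge 2$ already gives $\log f(n)\ge\log 2>\tfrac12$ in your contradiction step.
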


\begin{proof}
By hypothesis $d_\C(g,f)>0$.  For $\alpha>1$, the backward iterates give
\[
  d_\C(\psi_\alpha^{-k}(g),\psi_\alpha^{-k}(f))
  = \alpha^k\,d_\C(g,f) \;\to\;\infty.
\]
Since $d_\C^s\ge d_\C$, we have
$d_\C^s(\psi_\alpha^{-k}(f),\psi_\alpha^{-k}(g))>\delta$ for
$k$ large enough; set $N=-k$.  For $0<\alpha<1$, the forward iterates give
\[
  d_\C(\psi_\alpha^k(g),\psi_\alpha^k(f))
  = \alpha^{-k}\,d_\C(g,f) \;\to\;\infty,
\]
and we set $N=k$.

The claim about $f(n)\log f(n)=o(g(n))$ being sufficient: that
condition implies $g(n)$ dominates $f(n)$ for large~$n$, so
$1/f(n)>1/g(n)$ eventually and $d_\C(g,f)>0$ follows.
\end{proof}

\begin{remark}\label{rem:hierarchy}
The dynamical orbit-separation criterion is strictly finer than the
Hartmanis--Stearns time-hierarchy criterion: any pair of functions
with $d_\C(g,f)>0$ produces orbit separation, regardless of whether
they are separated by a time-hierarchy gap.  Whether a quantitative
orbit-separation rate corresponds in any nontrivial way to a
hierarchy-style separation in $\mathrm{DTIME}$ is an open problem.
\end{remark}

\begin{example}[Linear vs.\ $n\log^2 n$]\label{ex:hier-nlogn}
Let $f(n)=n$ and $g(n)=n\log^2(n+1)$.  Then
$f(n)\log f(n)=n\log n=o(n\log^2(n+1))=o(g(n))$, so the hierarchy
condition holds.  Numerically, with $\alpha=2$ and $\delta=0.05$,
separation already occurs at iterate $k=0$ with
$d_\C^s\approx 0.541$.  See
\href{\repourl/blob/main/code/python/hierarchy_separation.py}{\texttt{hierarchy\_separation.py}} and
\href{\repourl/blob/main/code/sagemath/separation_iterates.sage}{\texttt{separation\_iterates.sage}}
for verification.
\end{example}

\begin{example}[Polynomial vs.\ exponential]\label{ex:hier-poly-exp}
Let $f(n)=n^2$ and $g(n)=2^n$.  Here $f(n)\log f(n)=2n^2\log n
=o(2^n)=o(g(n))$, so the hierarchy condition is easily satisfied.
The orbit separation occurs very quickly (at $k=0$ or $k=-1$),
reflecting the huge gap between polynomial and exponential complexity.
\end{example}

\begin{example}[Counterexample: insufficient gap]\label{ex:counter-hierarchy}
Let $f(n)=n\log n$ and $g(n)=n\log n \cdot \log\log n$. 
Here $f(n)\log f(n) = n\log n \cdot \log(n\log n) \sim n\log^2 n$,
while $g(n) = n\log n \cdot \log\log n$. 
Since $n\log^2 n$ is not $o(n\log n \cdot \log\log n)$, 
the hierarchy condition is NOT satisfied. Indeed, numerically
$d_\C^s(f,g)$ remains small under iteration, and for small $\delta$
separation may never occur.
\end{example}

\begin{figure}[H]
\centering
\begin{tikzpicture}[scale=0.85, >=stealth]
  \draw[thick, ->] (-5.8,0) -- (6.0,0) node[right, font=\small] {iterate $k$};
  \draw[thick, ->] (0,-0.7) -- (0,5.0) node[above, font=\small] {$d_\C^s$};
  % The curve
  \draw[very thick, purple!70!black, domain=-5.0:5.0, samples=60]
       plot (\x, {0.2*exp(0.5*abs(\x)) + 0.15});
  % Delta line
  \draw[thick, dashed, orange!80!black] (-5.8,2.7) -- (6.0,2.7);
  \node[right, orange!80!black, font=\small] at (6.0,2.7) {$\delta$};
  % Annotations
  \node[font=\footnotesize, purple!70!black] at (-4.0,4.2) {backward: $\alpha^k d_\C(g,f)$};
  \node[font=\footnotesize, purple!70!black] at (4.0,4.2) {forward: $\alpha^{-k}d_\C(f,g)$};
  % Crossing points
  \fill[red!70] (-3.8,{0.2*exp(0.5*3.8)+0.15}) circle (3.5pt);
  \fill[red!70] (4.2,{0.2*exp(0.5*4.2)+0.15}) circle (3.5pt);
  % Axes labels
  \foreach \k in {-5,-4,-3,-2,-1,0,1,2,3,4,5} {
    \draw (\k,0.1) -- (\k,-0.1);
    \node[font=\tiny, below] at (\k,-0.2) {$\k$};
  }
  \node[font=\footnotesize, red!70!black] at (-3.8,-0.9) {separation};
  \node[font=\footnotesize, red!70!black] at (4.2,-0.9) {separation};
\end{tikzpicture}
\caption{Orbit separation in the symmetrized metric.  For functions
satisfying the hierarchy gap, both forward and backward orbits
eventually exceed any threshold~$\delta$.}
\label{fig:hierarchy}
\end{figure}
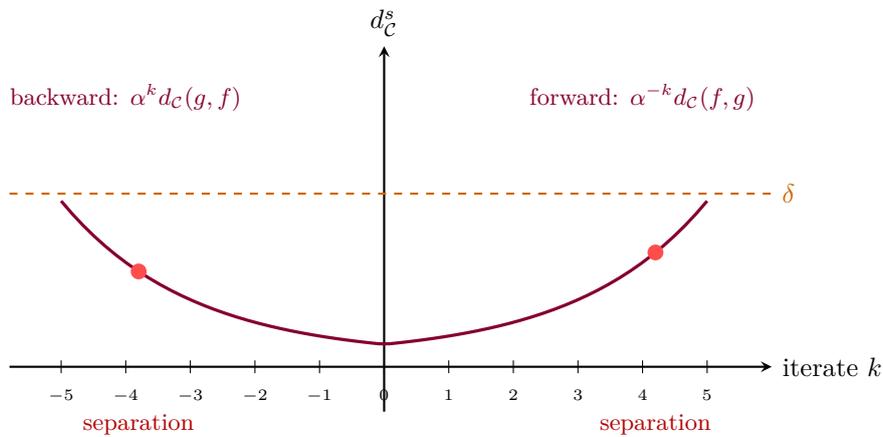

% ═══════════════════════════════════════════════════════════════
\section{Compact invariant sets and the entropy obstruction}
\label{sec:entropy}
% ═══════════════════════════════════════════════════════════════

Topological entropy is the standard invariant for measuring the
``complexity of the dynamics''---the rate at which information
about initial conditions is needed to predict the future.  For
expansive homeomorphisms on compact spaces, it is always
positive~\cite{bowen1975}.

This section shows that the standard compact-set entropy
machinery yields only trivial information for $\psi_\alpha$:
every compact $\psi_\alpha$-invariant subset of $(\C,d_\C^s)$
is $d_\C^s$-trivial (Proposition~\ref{prop:no-compact-invariant}).
The question of a non-trivial entropy for the full non-compact
dynamics is left open.

\subsection{Setup and definition}

Let $K\subset\C$ be a compact subset (in the $d_\C^s$ topology)
that is $\psi_\alpha$-invariant.  The topological entropy
$h(\psi_\alpha|_K)$ is defined via the growth rate of $(n,\eps)$-spanning
sets.  A set $E\subset K$ is \emph{$(n,\eps)$-spanning} if for every
$f\in K$ there exists $g\in E$ with
$\max_{0\le j<n}d_\C^s(\psi_\alpha^j(f),\psi_\alpha^j(g))<\eps$.

\begin{definition}[Topological entropy]
\[
  h(\psi_\alpha|_K)
  = \lim_{\eps\to 0}\limsup_{n\to\infty}\frac{1}{n}\log r(n,\eps),
\]
where $r(n,\eps)$ is the minimum cardinality of an
$(n,\eps)$-spanning set.
\end{definition}

Before stating the main entropy bound, we discuss which subsets of
$\C$ are compact in the $d_\C^s$ topology.

\begin{remark}[Compactness in $(\C,d_\C^s)$]\label{rem:compactness}
The full space $\C$ is not compact in the $d_\C^s$ topology: the
sequence $(f_k)_{k\ge 1}$ with $f_k\equiv k$ lies in $\C$ but is
$d_\C^s$-unbounded.  A natural candidate for a compact, $\psi_\alpha$-invariant
subspace would be a ``complexity band'' $K_{a,b}=\{f\in\C : a\le f(n)
\le b \text{ for all }n\}$ for some $0<a\le b$.  However,
$\psi_\alpha(K_{a,b})=K_{\alpha a,\alpha b}\ne K_{a,b}$ for any
$\alpha\ne 1$, so no such band is invariant.  The same obstruction
applies to any finite union of $\psi_\alpha$-orbits: each orbit
$\{\psi_\alpha^k(f):k\in\Z\}=\{\alpha^k f:k\in\Z\}$ is infinite and
unbounded for $\alpha\ne 1$, hence not compact.
Proposition~\ref{prop:no-compact-invariant} below makes this
obstruction rigorous: every non-empty compact $\psi_\alpha$-invariant
subset is $d_\C^s$-trivial, so the entropy of $\psi_\alpha|_K$ is
always $0$ on such~$K$.
\end{remark}

\begin{proposition}[No non-trivial compact invariant sets]\label{prop:no-compact-invariant}
For every $\alpha>0$ with $\alpha\ne 1$, every non-empty compact
$\psi_\alpha$-invariant subset $K\subseteq(\C,d_\C^s)$ consists of
$d_\C^s$-equivalent points only: $d_\C^s(f,g)=0$ for all $f,g\in K$.
\end{proposition}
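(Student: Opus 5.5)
The plan is to use the exact scaling identity $d_\C^s(\psi_\alpha(f),\psi_\alpha(g))=\alpha^{-1}d_\C^s(f,g)$, established in the proof of Lemma~\ref{lem:psi-homeo}. Its iterate reads $d_\C^s(\psi_\alpha^k(f),\psi_\alpha^k(g))=\alpha^{-k}d_\C^s(f,g)$ for all $k\in\Z$. We combine this with the fact that a compact subset of a metric space has finite diameter. By replacing $\alpha$ with $1/\alpha$ if necessary (note that $\psi_{1/\alpha}=\psi_\alpha^{-1}$, and invariance under $\psi_\alpha$ will give invariance under its inverse, as checked below), we may assume $0<\alpha<1$, so that forward iterates expand $d_\C^s$ by the factor $\alpha^{-1}>1$.

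First we pin down what ``invariant'' buys us. If invariance means $\psi_\alpha(K)=K$, then $\psi_\alpha^k(K)=K$ for every $k\in\Z$ and nothing more is needed. If it only means $\psi_\alpha(K)\subseteq K$, then forward invariance $\psi_\alpha^k(K)\subseteq K$ for $k\ge 0$ still holds. We therefore arrange the argument to use only forward iterates in the expanding direction. For $\alpha<1$ this is immediate. For $\alpha>1$ under mere forward invariance we use a different trick: the pointwise-order argument sketched at the end of the next paragraph.

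Main step. Let $D:=\sup_{f,g\in K}d_\C^s(f,g)$. This is finite because $K$ is compact in the metric $d_\C^s$, hence bounded. Suppose there were $f,g\in K$ with $d:=d_\C^s(f,g)>0$. In the case $\alpha<1$ with $\psi_\alpha^k(K)\subseteq K$, both $\psi_\alpha^k(f)$ and $\psi_\alpha^k(g)$ lie in $K$, and their distance is $\alpha^{-k}d\to\infty$. This contradicts $D<\infty$.

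For $\alpha>1$ with $\psi_\alpha(K)=K$, we apply the same argument to $\psi_\alpha^{-k}$, using $\psi_\alpha^{-k}(K)=K$.

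For $\alpha>1$ with only forward invariance, we argue differently. For any $f\in K$ the sequence $\alpha^k f$ lies in $K$, so by compactness it has a $d_\C^s$-convergent subsequence. Now $d_\C^s(\alpha^{k}f,\alpha^{m}f)=(1/f\text{-weight})\cdot|\alpha^{-k}-\alpha^{-m}|\,w_\C(f)$, so the sequence is Cauchy and converges to a limit that would have to satisfy $1/h=0$. Such an $h$ is not in $\C$, which contradicts the closedness of $K$ in $\C$. Hence such a $K$ is empty, and the claim holds vacuously.

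The main obstacle is this bookkeeping about which notion of invariance is meant. With $\psi_\alpha(K)=K$, which is the standard meaning and the one used in Remark~\ref{rem:compactness}, the proof is the three-line unboundedness argument of the main step. Indeed, since $d_\C^s$ is a genuine metric, the conclusion says that $K$ is a single point. I would state the proof for $\psi_\alpha(K)=K$ and add the one-sided remark above for completeness.
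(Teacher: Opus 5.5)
Your main step is correct and is essentially the paper's proof. The paper takes $\alpha>1$ and reads $\psi_\alpha(K)=K$ as $\mathrm{diam}(K)=\alpha^{-1}\,\mathrm{diam}(K)$; you run the same exact scaling of $d_\C^s$ in the expanding time direction and contradict boundedness, which amounts to the same argument. Your extra one-sided-invariance remark is sound in spirit (it even shows that no non-empty such $K$ exists), but its displayed identity should read $d_\C^s(\alpha^k f,\alpha^m f)=|\alpha^{-k}-\alpha^{-m}|\,w_\C(f)$, without the stray ``$1/f$-weight'' factor.
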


\begin{proof}
Suppose $\alpha>1$ (the case $\alpha<1$ is symmetric, replacing
$\psi_\alpha$ by $\psi_\alpha^{-1}=\psi_{1/\alpha}$).  By
Corollary~\ref{cor:exact-contraction} applied to both $d_\C$ and
its conjugate $d_\C^t$, the symmetrization satisfies
\[
  d_\C^s(\psi_\alpha(f),\psi_\alpha(g))
  \;=\;\tfrac{1}{\alpha}\,d_\C^s(f,g),
\]
so $\psi_\alpha$ is a strict contraction on $(\C,d_\C^s)$ with
Lipschitz constant $1/\alpha<1$.  If $K$ is non-empty, compact, and
$\psi_\alpha$-invariant, then $K$ is bounded in $(\C,d_\C^s)$
(compact sets in a metric space have finite diameter), and the
diameter $\mathrm{diam}(K)=\sup_{f,g\in K}d_\C^s(f,g)<\infty$
satisfies $\mathrm{diam}(\psi_\alpha(K))=(1/\alpha)\mathrm{diam}(K)$.
Since $\psi_\alpha(K)=K$, we obtain
$\mathrm{diam}(K)=(1/\alpha)\mathrm{diam}(K)$ with
$\mathrm{diam}(K)<\infty$ and $1/\alpha\ne 1$, forcing
$\mathrm{diam}(K)=0$.
\end{proof}

\begin{remark}\label{rem:entropy-vacuous}
Proposition~\ref{prop:no-compact-invariant} shows that the standard
topological-entropy machinery (Bowen spanning sets on a compact
invariant set) cannot detect any non-trivial dynamics of
$\psi_\alpha$ on $(\C,d_\C^s)$ directly: the only candidate compact
invariant sets are $d_\C^s$-trivial.  This is a consequence of the
non-compactness of $\C$ in $\tau_{d_\C^s}$, and a known limitation
of the topological-entropy framework on non-compact spaces.
Alternative invariants suited to non-compact phase spaces (e.g.,
the volume-growth entropy of~\cite{walters1982}, or asymmetric
analogues) lie outside the scope of this paper and are an open
direction.
\end{remark}

\begin{example}[Compact invariant sets are trivial for $\alpha=2$]\label{ex:entropy-alpha2}
For $\alpha=2$, Proposition~\ref{prop:no-compact-invariant} shows
that any compact $\psi_2$-invariant $K\subseteq(\C,d_\C^s)$
satisfies $d_\C^s(f,g)=0$ for all $f,g\in K$.  In particular,
every two-point set $\{f,g\}$ with $d_\C^s(f,g)>0$ (such as
$f(n)=n$, $g(n)=2n$, which have $d_\C^s(f,g)\approx 0.347>0$) is
\emph{not} $\psi_2$-invariant: $\psi_2$ moves it out of itself.
The entropy of $\psi_2$ on any genuinely compact invariant set is
therefore~$0$; non-trivial dynamical complexity of $\psi_2$
manifests only on non-compact invariant sets.
\end{example}

\begin{example}[Singleton invariant sets]\label{ex:entropy-zero}
Let $K=\{f\}$ for any $f\in\C$.  Then $\psi_\alpha(K)=\{\alpha f\}\ne K$
unless $\alpha=1$, so a non-trivial singleton is not
$\psi_\alpha$-invariant for $\alpha\ne 1$.  This is consistent with
Proposition~\ref{prop:no-compact-invariant}: the only compact
invariant sets are $d_\C^s$-trivial, i.e., $d_\C^s$-equivalence
classes, which may be singletons only when the orbit is constant ---
impossible for $\alpha\ne 1$ on distinct points.
\end{example}

\begin{remark}[Entropy on non-compact sets]\label{rem:entropy-gap}
Proposition~\ref{prop:no-compact-invariant} shows that compact
$\psi_\alpha$-invariant sets carry zero entropy.  The natural
question is whether a meaningful entropy invariant can be defined
for the non-compact dynamics of $\psi_\alpha$ on all of $\C$.  On
metric spaces, the variational principle equates topological entropy
with the supremum of measure-theoretic entropies; an analogous
quasi-metric version, if it could be established, might yield a
non-trivial entropy for $\psi_\alpha|_\C$.
\end{remark}

An algorithm for numerically estimating the entropy via spanning
sets is given in
\href{\repourl/blob/main/code/python/entropy_estimate.py}{\texttt{entropy\_estimate.py}}.

% ═══════════════════════════════════════════════════════════════
\section{Conclusion}
\label{sec:conclusion}
% ═══════════════════════════════════════════════════════════════

This paper develops the theory of expansive homeomorphisms on
the complexity quasi-metric space introduced by Schellekens.  The
main results are:

\begin{enumerate}[leftmargin=2em]
  \item \textbf{Expansiveness characterisation}
    (Theorem~\ref{thm:main-scaling}): the scaling map $\psi_\alpha$
    is expansive on $(\C,d_\C)$ if and only if $\alpha\neq 1$.
  \item \textbf{Stable sets as complexity classes}
    (Theorem~\ref{thm:stable-sets}): the $\delta$-stable sets of
    $\psi_\alpha$ coincide with neighbourhoods in $d_\C$ and contain
    all functions that are asymptotically at least as slow.
  \item \textbf{Exact contraction rate} (Corollary~\ref{cor:exact-contraction}):
    distances contract exactly by factor $1/\alpha$ per forward
    iterate, with constant $C=1$.
  \item \textbf{Hierarchy as orbit separation}
    (Theorem~\ref{thm:hierarchy}): the time hierarchy theorem of
    Hartmanis and Stearns corresponds to orbit separation in the
    symmetrized quasi-metric.
\end{enumerate}

Together, these results show that the complexity quasi-metric
space is a workable setting for applying dynamical-systems methods
to computational complexity.

\medskip
The following table summarises the correspondence between dynamical
and complexity-theoretic concepts established in this paper.

\begin{table}[H]
\centering
\renewcommand{\arraystretch}{1.3}
\begin{tabular}{lll}
\hline
\textbf{Dynamical concept} & \textbf{Complexity interpretation} & \textbf{Reference} \\
\hline
Quasi-metric $d_\C(f,g)=0$ & $f$ at least as fast as $g$ & Thm~\ref{thm:dc-props}(ii)\\
Scaling map $\psi_\alpha$ & Uniform speed change by $\alpha$ & Def~\ref{def:scaling}\\
Expansiveness ($\alpha\neq 1$) & Orbits eventually separate & Thm~\ref{thm:main-scaling}\\
$\delta$-stable set & Complexity class neighbourhood & Thm~\ref{thm:stable-sets}\\
Unstable set & All pointwise-faster functions & Thm~\ref{thm:unstable-sets}\\
Exact contraction rate & $d_\C$ decays as $(1/\alpha)^n$ & Cor~\ref{cor:exact-contraction}\\
Backward expansion & $d_\C$ grows as $\alpha^n$ & Cor~\ref{cor:backward-expansion}\\
Orbit separation & Time hierarchy gap & Thm~\ref{thm:hierarchy}\\
No non-trivial compact invariant sets & Entropy of $\psi_\alpha|_K$ is zero & Prop~\ref{prop:no-compact-invariant}\\
\hline
\end{tabular}
\caption{Dictionary between dynamical systems and complexity theory.}
\label{tab:dictionary}
\end{table}

A list of open problems and directions for future work --- including
non-linear and composition-based variants of $\psi_\alpha$, weighted
complexity quasi-metrics, the shadowing property, and the
relationship to entropy on non-compact phase spaces --- is
maintained alongside the paper in
\href{\repourl/blob/main/OPEN-PROBLEMS.md}{\texttt{OPEN-PROBLEMS.md}}
in the companion repository.

\subsection*{Acknowledgements}

The author thanks AIRINA Labs for institutional support.

% ═══════════════════════════════════════════════════════════════

\end{document}